\newtheorem{theorem}{Theorem} 
\newtheorem{defi}{Definition}
\newtheorem{corollary}{Corollary}
\newtheorem{proposition}{Proposition}
\newtheorem{lemma}{Lemma}
\newcommand{\Cross}{\mathbin{\tikz [x=1.4ex,y=1.4ex,line width=.2ex] \draw (0,0) -- (1,1) (0,1) -- (1,0);}}%
\title{Extremal combinatorics,
iterated pigeonhole arguments\\
and generalizations of PPP}
\author{\textbf{Amol Pasarkar} \\ Columbia University \\ amol.pasarkar@columbia.edu
\and \textbf{Christos Papadimitriou} \\ Columbia University \\ christos@columbia.edu 
\and \textbf{Mihalis Yannakakis} \\ Columbia University \\ mihalis@cs.columbia.edu

}
\date{}
\begin{document}

\maketitle

\begin{abstract}
We study the complexity of computational problems arising from existence theorems in extremal combinatorics. For some of these problems, a solution is guaranteed to exist based on an \textit{iterated} application of the Pigeonhole Principle.  This results in the definition of a new complexity class within TFNP, which we call PLC (for ``polynomial long choice"). PLC includes all of PPP, as well as numerous previously unclassified total problems, including search problems related to Ramsey's theorem, the Sunflower theorem, the Erd\H os-Ko-Rado lemma, and K\" onig's lemma.  Whether the first two of these four problems are PLC-complete is an important open question which we pursue; in contrast, we show that the latter two are PPP-complete. Finally, we reframe PPP as an optimization problem, and define a hierarchy of such problems related to Tur\` an's theorem. 
\end{abstract}
	
\section{Introduction}
The complexity class TFNP \citep{MEGIDDO_TFNP_1991} captures a wide variety of search problems which are believed to lie between P and NP --- in fact, almost all problems in NP not yet known to be in P, or close to it, appear to belong to this class. The `TF' in TFNP indicates that it is a class of total function problems --- computational search problems which are mathematically guaranteed to have a solution on all instances --- while the letters ``NP" in TFNP signify that solutions are polynomially checkable.  Two things make TFNP interesting:  First, it is a microcosm of many complexity classes, each of which is identified with a non-constructive combinatorial lemma used in the proof of totality \citep{PAPADIMITRIOU1994},\citep{JOHNSON1988}, \citep{GOLDBERG2018}.  Second, it contains --- almost by its definition ---  many problems that are of great interest in Cryptography.  The most obvious and best known example is of course factoring, but many other computational problems of cryptographic interest lie in the subclass \textbf{PPP} \cite{JERABEK2016380,Sotiraki2018} whose existence lemma is the Pigeonhole Principle: \textit{if there are $2^n$ pigeons to be placed into $2^n - 1$ pigeonholes, there must exist a pigeonhole with at least two pigeons.}


\textit{In this paper we study the complexity of total search problems in the important field of extremal combinatorics} \cite{bollobas,graham1990ramsey,jukna}.  Notice that the Pigeonhole Principle itself can be seen as an argument in extremal combinatorics: \textit{``If a combinatorial object of a certain kind (here, a set mapped to $[N]$) is large enough, it must contain a certain substructure (here a collision).''} This leads one to ask: Are then the computational problems coming from extremal combinatorics in PPP?  
We point out that the combinatorial lemma underlying many such problems is a counting argument which iteratively uses a form of the Pigeonhole Principle. Accordingly, we introduce a new complexity class, which we call ``Polynomial Long Choice" (PLC) capturing the complexity of the iterated Pigeonhole Principle. 

To understand the generic problem in this class, consider the following two-player game. Player 1 seeks to construct a long sequence of pigeons and Player 2 tries to make this impossible. We start with $2^n$ pigeons. At each stage, Player 1 can pick (without replacement) a single pigeon from the remaining available pigeons to add to the long sequence. Once Player 1 has made a move, Player 2 can then partition the remaining pigeons into two groups. Next, Player 1 will pick a pigeon from one of these groups to add to the sequence, and the pigeons from the other group will immediately be removed from the game. Player 1 wins if a sequence of pigeons of length $n+1$ is constructed, otherwise Player 2 wins.  It is easy to see that Player 1 has a winning strategy in this game (pick any pigeon from the larger group in every iteration) --- and this is the existence lemma defining PLC.  To make this into a computational problem, which we call {\sc Long Choice}, we equip Player 2 with a suite of polynomial-time algorithms, one for each stage of the game (we give the precise definitions below); PLC is the class of all search problems reduced to {\sc Long Choice}.

We show that {\sc Long Choice} is PPP-hard, and hence PLC contains PPP.  But is this containment strict?  And even if not, why is {\sc Long Choice}, defined by an iterative application of pigeonhole arguments, not contained in P$^{\rm PPP}$?  The difficulty is this:  The winning strategy requires that Player 1 estimates the majority correctly at each stage and also successfully selects a pigeon from this majority. PPP does not seem to support both of these challenges.

It turns out that PLC contains a host of natural problems embodying important theorems in extremal combinatorics, first and foremost Ramsey's, but also the sunflower theorem, the Erd\H os-Ko-Rado lemma and K\"onig's Lemma. The latter two, however, can be shown to be PPP-complete.  We also study  problems associated with another classical result in extremal combinatorics, namely Mantel's Theorem (``a graph with $N$ nodes and more than $N^2/ 4$ edges cannot be triangle-free'') and Turan's theorem (the generalization to $k$-clique-free graphs). We identify an infinite hierarchy of problems related to these theorems, each of which is PPP-hard. This generalization of PPP seems substantially different from PLC, in the sense that the source of computational hardness arises from information-theoretic reasons, namely, an inefficient encoding of the search problem. 

But iterating the Pigeonhole Principle can take us even higher:  consider the dual problem which can be called {\sc Short Choice}:  Suppose that the above game had $2^n - 2$ pigeons, and that Player 1 now
wants to terminate the game as soon as possible and Player 2 wants the opposite, to extend it;
the game terminates when one of the groups created by Player 2 is empty. 
It is easy to see that Player 1 cannot be forced to make more than $n-1$ moves, by choosing in every iteration to continue in the smaller of the two groups.  Now call this problem {\sc Short Choice;} it is certainly total, but it does not seem to belong to NP (how does one verify that there is no pigeon left that is consistent with all the previous choices of Player 1?). This is reminiscent of the {\em empty pigeonhole principle} recently explored in \cite{Kleinberg2020TFNP} and the class PEPP belonging in $TF\Sigma_2P$ and not believed to be inside TFNP (or NP).  
We show that {\sc Short Choice} is a PEPP-hard problem that defines a new subclass of $TF\Sigma_2P$. 

\section{Long Choice}
We start by recalling the definition of the class PPP.  We first define the problem {\sc Collision} to be the following: we are given a Boolean circuit $C$ with $n$ input bits and $n$ output bits, and we seek either (a) an input $x$ such that $C(x)=0^n$, or (b) a {\em collision}, two distinct inputs $x\neq y$ such that $C(x)=C(y)$.  The class PPP is the set of all search problems that reduce to {\sc Collision.}
For the {\em weak} version of PPP, denoted PWPP, the circuit has $n$ inputs and $n-1$ outputs, and a collision is sought.  It is known that $n-2$ or fewer outputs, down to $n^\delta$ for any $\delta>0$, yield the same class \cite{Komargodski2019}.  

Let us next define the search problem {\sc Ramsey,} motivated by one of the most influential theorems in all of combinatorics: Given a graph with $2^{2n}$ nodes, represented by a circuit with $4n$ input bits
and one output bit, we seek either a clique with $n$ nodes, or an independent set with $n$ nodes. 
The nodes are represented by $2n$-bit strings and the circuit specifies the edge relation of the graph.
The well known proof of Ramsey's theorem proceeds by constructing a sequence of $2n$ nodes, where: the first node is arbitrary; and the next node is selected from the available nodes to belong in the {\em majority,} either adjacent or nonadjacent to the last node, whichever group is larger.  In addition, the smaller group becomes unavailable.  Since we start with $2^{2n}$ nodes and the minority becomes unavailable at each step, it is clear that a sequence of $2n$ nodes can be selected, and therein we will find either an independent set or a clique with $n$ nodes.

This proof inspires the key definition of this paper:
\begin{defi} 
The {\sc Long Choice} problem is the following: There is a universe $U$ of $2^n$ objects, represented by the $2^n$ $n$-bit strings.  We are given a sequence of $n-1$ circuits $P_0, \dots, P_{n-2}$, each of poly($n$) size, such that $P_i$ has $(i+2)n$ input bits and one output bit;  circuit $P_i$ represents a {\em predicate} on $i+2$ objects.  We are asked to find a sequence of $n+1$ distinct objects $a_0, \dots, a_{n}$, with the following property: for each $i$ in $[0, \dots, n-2]$, $P_i(a_0, \dots, a_{i},a_{j})$ is the same for all $j>i$.
\end{defi}

\begin{theorem}
{\sc Long Choice} is a total problem in TFNP.
\end{theorem}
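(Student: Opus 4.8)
The plan is to verify the two defining properties of a TFNP problem in turn: totality (every instance has a solution), and polynomial-time checkability of candidate solutions.

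For totality I would run the majority argument that proves Ramsey's theorem, now against the abstract predicates $P_i$. Maintain a shrinking set $T_i \subseteq U$ of ``surviving'' objects together with the partial sequence $a_0,\dots,a_i$ already chosen, keeping the invariant $|T_i|\ge 2^{n-i}-1$. Initialize with $a_0\in U$ arbitrary and $T_0 = U\setminus\{a_0\}$, so $|T_0| = 2^n-1$. At step $i$, for $i=0,\dots,n-2$, partition $T_i$ according to the value of $P_i(a_0,\dots,a_i,x)$ as $x$ ranges over $T_i$; this yields two classes, the larger of which — call it $T_i'$ — has size at least $\lceil |T_i|/2\rceil$. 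Pick $a_{i+1}\in T_i'$ arbitrarily, discard the other class, and set $T_{i+1} = T_i'\setminus\{a_{i+1}\}$. Since $2^{n-i}-1$ is odd we have $\lceil (2^{n-i}-1)/2\rceil = 2^{n-i-1}$, hence $|T_{i+1}| \ge 2^{n-i-1}-1 = 2^{n-(i+1)}-1$ and the invariant is preserved exactly. After the step $i=n-2$ we have $|T_{n-1}|\ge 2^{1}-1 = 1$, so a final object $a_n\in T_{n-1}$ can be selected, giving $n+1$ objects in all.

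Next I would check this sequence is a solution. Distinctness is immediate, since the chosen object is deleted from the surviving set at each step. For the predicate condition, note that $T_{i+1}\subseteq T_i'$ and therefore $T_j\subseteq T_i'$ for every $j\ge i+1$; consequently every $a_j$ with $j>i$ lies in $T_i'$, a set on which the map $x\mapsto P_i(a_0,\dots,a_i,x)$ is by construction constant. Hence $P_i(a_0,\dots,a_i,a_j)$ is the same for all $j>i$, for each $i\in\{0,\dots,n-2\}$, which is exactly the required property. For membership in NP, observe a candidate solution is a list of $n+1$ strings of $n$ bits, of total size $O(n^2)$; verification amounts to $O(n^2)$ pairwise string comparisons for distinctness, plus, for each $i\le n-2$ and each $j\in\{i+1,\dots,n\}$, one evaluation of the $\mathrm{poly}(n)$-size circuit $P_i$ on $(a_0,\dots,a_i,a_j)$ followed by a check that, for fixed $i$, all these values coincide — $O(n^2)$ circuit evaluations, hence polynomial time overall.

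I do not expect a real obstacle here; the only thing needing care is the bookkeeping of the set sizes — specifically, ensuring that the ``$-1$'' incurred by removing the just-chosen object at every step does not accumulate and exhaust the universe before $n+1$ objects have been picked. This works out cleanly precisely because $2^{n-i}-1$ is odd, so after rounding up the majority class has size exactly $2^{n-i-1}$, and the invariant $|T_i|\ge 2^{n-i}-1$ is reproduced without loss.
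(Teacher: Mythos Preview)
Your proposal is correct and follows essentially the same approach as the paper: pick an arbitrary $a_0$, then repeatedly partition the surviving objects by the current predicate, keep the majority, and select the next element from it. Your version is simply more explicit about the bookkeeping (tracking the invariant $|T_i|\ge 2^{n-i}-1$ and verifying distinctness and the predicate condition), whereas the paper states the majority bound more informally; the underlying argument is the same.
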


\begin{proof}
    Our construction is inspired by the proof of Ramsey's Theorem as well as the two-player game which we described in the introduction. 
    
    First, we pick an arbitrary element $a_0$ in the universe. Then, we partition the remaining elements $a_i$ into two categories, based on the value of $P_0(a_0, a_i)$. Since there are $2^n - 1$ elements being partitioned, the majority of this partition must have at least $2^{n-1}$ elements. We select an arbitrary element $a_1$ from this majority and discard all elements from the minority. 
    
    We then continue this procedure: we partition the remaining elements $x$ based on the value of $P_1(a_0, a_1, x)$, and we pick an arbitrary element $a_2$ from the majority of this new partition. 
    
    We can continue partitioning elements in this fashion and picking elements from the majority until we arrive at a complete Long Choice certificate. This proves the totality of the problem. Membership in TFNP follows from the fact that a candidate certificate sequence $a_0, \ldots a_n$ can be checked easily in polynomial time.
\end{proof}

The proof implies that the problem remains total if the sequence $a_0, a_1, \ldots, a_n$ is constrained to start with a specific given object $a_0$. In fact, the constrained variant of the problem where the starting object $a_0$ is specified in the input of the {\sc Long Choice} problem turns out to be polynomially equivalent to the unconstrained version defined above; a proof is given in the appendix.

Critically, {\sc Long Choice} is also PPP-hard. We give first an outline of the basic idea of the proof, and then proceed to the detailed formal proof.  Consider the two-player game from the Introduction which characterizes {\sc Long Choice}. In this game, player 2 can behave (i.e., the predicates $P_i$ can be specified) in a way that guarantees that the only way player 1 wins is by finding a certificate to a PPP-complete problem {\sc Collision.} Consider an instance of {\sc Collision}, given by a circuit, $C$, which maps $n$-bit strings to $n$-bit strings. Player 1 starts the game with $2^n$ objects (the domain of the circuit). At each round, player $1$ picks an element from the remaining set of objects. If at any round, player $1$'s choices so far contain a certificate to {\sc Collision} (that is, one element is a zero element or a pair of elements collide under $C$), then player 2 stops partitioning the remaining elements. That is, player 2, for the rest of the game, places all remaining elements into the same side of the partition, guaranteeing a path to victory for player 1. 

In each round, player $2$ considers the ``vacant spots" in the range of $C$: the nonzero values of the range that do NOT contain the image of player $1$'s choices. 
At round $1$ of the game, player $1$ makes some choice, call it $a_0$. If $C(a_0) = 0$, then player $1$ has found a certificate, and we are done. If $C(a_1)$ is positive, then there are $2^n - 2$ vacant spots left. Player $2$ splits the set of vacant spots into two even halves (there are many ways to do this, one way is for player $2$ to specify a constant, $k$ and declare that all vacant spots less than or equal to $k$ belong to one half, and the vacant spots greater than $k$ belong to the other). 

Regardless of what value player $1$ chooses for $a_1$ in round $2$, all subsequent elements must belong to the same subgroup of $C(a_1)$. After $C(a_1)$ is chosen, the number of vacant spots in this subgroup is $2^{n-1} - 2$. 

The game continues in this fashion, with player $2$ always taking note of the remaining available vacant spots, and splitting this set into $2$ even groups. In general, after the $i$-th round, there will be at most $2^{n-i+1} - 2$ vacant spots left. Therefore, after the $n$-th round, assuming no certificate has yet been found, there will be $2^{1} - 2 = 0$ vacant spots. Therefore, the $(n+1)$-th choice player $1$ makes must provide a collision (or zero element).

\begin{theorem}
{\sc Long Choice} is PPP-hard.
\end{theorem}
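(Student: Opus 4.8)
The plan is to give a polynomial-time reduction from {\sc Collision} (PPP-complete by definition) to {\sc Long Choice}, turning the two-player intuition sketched above into a statement about \emph{arbitrary} certificates rather than about Player 1's greedy strategy. Given a circuit $C:\{0,1\}^n\to\{0,1\}^n$, take the {\sc Long Choice} universe to be $U=\{0,1\}^n$ (the domain of $C$) and build predicates $P_0,\dots,P_{n-2}$ as follows. On input $(a_0,\dots,a_k,x)$, circuit $P_k$ first evaluates $C$ on $a_0,\dots,a_k$; if these already witness a {\sc Collision} certificate --- some $C(a_l)=0^n$, or $C(a_l)=C(a_{l'})$ with $l\neq l'$ --- then $P_k$ outputs the constant $0$. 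Otherwise $C(a_0),\dots,C(a_k)$ are distinct and nonzero, and $P_k$ computes, recursively in $k$, a nested chain of ``active regions'' $R_0\supseteq R_1\supseteq\cdots\supseteq R_k$: set $R_0=\{0,1\}^n\setminus\{0^n\}$; given $R_{l-1}$, let $V_{l-1}=R_{l-1}\setminus\{C(a_0),\dots,C(a_{l-1})\}$, split $V_{l-1}$ at its median threshold into two halves (viewing $n$-bit strings as integers), and let $R_l$ be whichever half contains $C(a_l)$ (with an arbitrary default if $C(a_l)\notin V_{l-1}$). Finally $P_k(a_0,\dots,a_k,x)$ is the index $b\in\{0,1\}$ of the half of $V_k$ containing $C(x)$, with an arbitrary default (say $0$) when $C(x)\notin V_k$.

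The first thing to verify is that each $P_k$ is a polynomial-size circuit, which is the one genuinely delicate point, since the sets $V_l$ are exponentially large. The key observation is that every $R_l$ is an integer interval with at most $l\le n$ ``holes'' removed (the earlier images $C(a_0),\dots,C(a_{l-1})$ that fall inside it), hence described by a pair of $n$-bit endpoints together with an explicit list of at most $n$ excluded $n$-bit values; computing the median threshold of $V_{l-1}$, deciding which half a given value lies in, and updating the endpoints for $R_l$ are then done by sorting $O(n)$ integers and doing interval arithmetic on $O(n)$-bit numbers. Together with the $O(n)$ evaluations of $C$, this is $\mathrm{poly}(n,|C|)$, and the map $C\mapsto(P_0,\dots,P_{n-2})$ is polynomial-time computable.

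Correctness amounts to: every valid {\sc Long Choice} certificate $a_0,\dots,a_n$ contains a {\sc Collision} certificate (which can then be read off in polynomial time by scanning for a zero or a colliding pair). Suppose, for contradiction, some certificate has no such sub-certificate; then $C(a_0),\dots,C(a_n)$ are distinct and nonzero, so the ``already-found'' branch never fires along this certificate and all the $R_l$ are defined as above. I would prove by induction on $k\ge 0$ the invariant that $C(a_j)\in R_k$ for all $j\ge k$, together with $|R_k|=2^{n-k}-1$ and $|V_k|=2^{n-k}-2$ (so all the halvings are of even-sized sets). The base case $k=0$ is immediate since every $C(a_j)$ is nonzero. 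For the step, the defining property of a {\sc Long Choice} certificate gives that $P_{k-1}(a_0,\dots,a_{k-1},a_j)$ is the same value for all $j>k-1$; by the inductive hypothesis together with no-collision, each such $C(a_j)$ lies in $V_{k-1}$, so $P_{k-1}$ reports its true half, whence $C(a_k),\dots,C(a_n)$ all lie in the common half of $V_{k-1}$ containing $C(a_k)$, which is exactly $R_k$; the cardinality count follows because $C(a_k)\in R_k$ while $C(a_0),\dots,C(a_{k-1})$ are not. Taking $k=n-1$ gives $C(a_{n-1}),C(a_n)\in R_{n-1}$ with $|R_{n-1}|=2^{1}-1=1$, forcing $C(a_{n-1})=C(a_n)$; but $a_{n-1}\neq a_n$ since certificate elements are distinct, so $(a_{n-1},a_n)$ is a {\sc Collision} certificate --- a contradiction. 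Hence the reduction is correct and {\sc Long Choice} is PPP-hard.

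The main obstacle is thus twofold: engineering the predicates to be of polynomial size while implicitly manipulating exponentially large vacant sets (resolved by the interval-with-few-holes representation), and strengthening the argument from ``the greedy majority strategy succeeds'' to ``every certificate works'' --- which is what forces the careful invocation of the certificate's defining constraint in the inductive step and the handling of the default values. The latter is harmless precisely because, in the no-collision case, all relevant images provably lie in the set where the predicate reports honestly.
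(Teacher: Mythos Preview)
Your proof is correct and follows essentially the same halving-of-vacant-spots strategy as the paper: where the paper pre-transforms the circuit to redirect zeros to $2^n-1$ and then works with pure intervals $B_i$ and their ``unfilled sets,'' you instead keep the zero case as an explicit early-exit check inside each $P_k$ and track the equivalent data as intervals-with-holes $R_k$ and $V_k$; the invariants $|V_k|=2^{n-k}-2$ and $C(a_j)\in R_k$ for $j\ge k$ match the paper's exactly. Your treatment is in fact slightly more explicit than the paper's on the one point that deserves care---namely, why the predicates are polynomial-size despite implicitly manipulating exponential sets---via the interval-plus-$O(n)$-holes representation.
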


\begin{proof}
Suppose that we are given a circuit $C_0$ mapping $n$ bits to $n$ bits,  an instance of the {\sc Collision} problem. We can view the inputs and outputs of circuits both as $n$-bit strings or as the equivalent integers in $[0, \ldots, 2^n -1]$.  Given $C_0$, we define a new circuit, $C$ which maps $n$-bit strings to $n$-bit strings. On input $a$, we define $C$ as follows:  
	
	\begin{enumerate}
	    \item $C(a) = 2^n-1$ (the all-1 string) if $C_0(a) = 0$
	    \item $C(a) = C_0(a)$ otherwise
	\end{enumerate}
	
	By the pigeonhole principle, since $C$ only maps inputs to nonzero values, it must have collisions. Any such collision will either allow us to recover a collision in $C_0$ or allow us to recover a zero element of $C$. 
	
	We now reduce the problem instance given by circuit $C$ to a Long Choice problem. We begin by defining our set, $U$, to be the domain of circuit $C$: the set of $n$-bit strings. Note that $U$ has $2^n$ distinct elements, each of which is represented as a unique $n$-bit string. It now suffices to define $n-1$ predicate functions $P_0, P_1, \dots, P_{n-2}$.  
	
	As in the proof of totality for Long Choice, we can think about constructing a certificate by sequentially making the choices $a_0, a_1, \dots, a_n$. Put simply, our predicate functions will classify the elements of $U$ based on their \textit{images} under $C$. These functions will enforce the following property: subsequent elements of our Long Choice certificate will have images under $C$ which are closer and closer together. More specifically, consider the first three elements of a Long Choice certificate, $a_0, a_1, a_2$. While $C(a_0)$ and $C(a_1)$ may be more than $2^{n-1}-1$ units apart, predicate function $P_0$ will enforce the condition that $C(a_1)$ and $C(a_2)$ are within $2^{n-1}-1$ units of each other. Each predicate function will, in effect, enforce similar ``closeness" conditions. Ultimately, this will force any Long Choice certificate to have two distinct elements whose images under $C$ are $0$ units apart (a collision under $C$!), as desired.
	
	In the following discussion, we assume all inputs to predicate functions are distinct, because this is required of any valid certificate.
	
	We now explicitly define the predicate functions:
	\begin{enumerate}
		\item $P_i(a_0, \dots, a_i, x) = 1$ if $C(x)$ is in the interval $F_{i}$ (defined below)
		\item $P_i(a_0, \dots, a_i, x) = 0$ if $C(x)$ is \textit{not} in the interval $F_{i}$.
	\end{enumerate}
	
	To complete the definition, we define the intervals, $F_i$, which depend on the elements $a_0, \ldots, a_i$. Before we do so, we introduce some basic terminology to make this discussion clearer.

	\begin{enumerate}
		\item \textbf{Unfilled set:} For any sequence of elements $a_0, \dots, a_i$ and an interval $[p, q]$, the unfilled set of the interval is defined as 
		
		$$\{ \{p,p+1, \dots, q\}\setminus{\{C(a_0), C(a_1), \dots, C(a_i)\}}\}$$
		
		For example, if given a sequence of points $a_0, a_1, a_2$ with $C(a_0) = 0$, $C(a_1) = 2$, $C(a_2) = 1$, the unfilled set for the interval  $[0, 5 ]$ is  $\{3, 4, 5\}$. 
		
        \item \textbf{Indexing an interval: }(Purely for notational convenience) Given an interval $I = [a,b]$, for $k > 0$, define $I[k]$  as the interval containing the smallest $k$ elements of $I$, and define $I[-k]$ as the interval containing the largest $k$ elements of $I$. For example, given $I = [1,4]$, $I[2] = [1,2 ]$, the first two integers in the interval, while $I[-3] = [2, 4]$, the largest $3$ integers in the interval.
	\end{enumerate}

	We now define a sequence of intervals $B_0, \dots, B_i, \dots$ and $F_0, \dots, F_i, \dots$. Critically, each interval $F_i$ is a contained in the corresponding interval $B_i$. We proceed with an inductive definition.

	\textbf{Base Case Definition.}
	For any single element sequence $a_0$, the corresponding interval $B_0$ is $[1, 2^n - 1]$. The unfilled set for $B_0$ has size $2^n - 2$. Consider the value of $k$ that guarantees that $B_0[k]$ has an unfilled set of size $2^{n-1}-1$. This can be easily computed:
	$k=2^{n-1}-1$ if $C(a_0)> 2^{n-1}-1$, else $k=2^{n-1}$. We define $F_0 = B_0[k]$.
	\\[12pt]
	\textbf{Inductive Definition.}
	Suppose that we have a sequence of elements $a_0, \dots, a_i$, and for all $k<i$, $B_k$ and $F_k$ are defined. We first define $B_i$ using the following rules: 
	\begin{enumerate}
	    \item First, if $B_{i-1}$ is an interval of size $1$, then $B_i = F_i = B_{i-1}$
	    \item Otherwise, if $C(a_i)$ is in $F_{i-1}$, then $B_i = F_{i-1}$. If $C(a_i)$ is not in $F_{i-1}$, then $B_i = B_{i-1}\setminus{F_{i-1}}$.
	    \item Finally, we define $F_i$. Let $x$ denote the size of the unfilled set of $B_i$. Let $k$ be the smallest integer such that $B_i[k]$ has $\lceil \frac{x}{2} \rceil$ unfilled spots. We let $F_i = B_i[k]$. 
	\end{enumerate}

	This completes the definition of the Long Choice problem instance. It remains to prove that a valid certificate for this problem instance allows us to recover a collision under $C$.
	Let $a_0, a_1, \ldots, a_n$ be a valid certificate. If there is a collision among these elements, we are done.
	So assume there is no collision; we will derive a contradiction.
	
	Consider the sequence of set $B_i, F_i$ generated from the sequence $a_0, a_1, \ldots, a_n$.
	Note that $B_0\supseteq{B_1}\dots\supseteq{B_i}\dots$ and $F_i \subseteq B_i$ for all $i$.
	It is easy to show inductively from the construction that the following two properties hold:\\
	1. $\forall i, \forall j \geq i, C(a_j) \in B_i$.\\
	2. $|B_i \setminus \{ C(a_0),\ldots, C(a_i)\}| = 2^{n-i}-2$ for all $i \leq n-2$.
	
	The basis case ($i=0$) for both properties is trivial. The induction step for property 1 follows from the fact that $P_{i-1}(a_0,\ldots,a_{i-1},a_j)$ has the same value for all $j \geq i$, hence either all these $C(a_j)$ are in $F_{i-1}$ or they are all not in $F_{i-1}$ and thus they are in $B_{i-1} \setminus F_{i-1}$ (because they are all in $B_{i-1}$ by the induction hypothesis). It follows from the definition of $B_i$ that they are all in $B_i$.
	
	For the induction step of property 2 note that the induction hypothesis 
	
	$$|B_{i-1} \setminus \{ C(a_0),\ldots, C(a_{i-1})\}| = 2^{n-i+1}-2$$ 
	
	implies that both $F_{i-1}$ and $B_{i-1} \setminus F_{i-1}$ have $2^{n-i}-1$ unfilled spots. Since $C(a_i ) \in B_i$ and $a_i$ does not collide with any earlier $a_j$, it follows that $|B_i \setminus \{ C(a_0),\ldots, C(a_i)\}| = 2^{n-i}-2$.
	
	From property 2, $B_{n-2}$ has an unfilled set of size $2^2 - 2 = 2$. The interval $F_{n-1}$ is, by definition, constructed such that $F_{n-1}$ and $B_{n-2}\setminus{F_{n-1}}$ both have $1$ unfilled spot. Critically, based on the definition of $P_{n-2}$, we know that $C(a_{n-1})$ and $C(a_{n})$ must both belong to $F_{n-1}$ or both belong to $B_{n-2}\setminus{F_{n-1}}$. Therefore,
	$C(a_{n-1})$ and/or $C(a_{n})$ must collide with each other or with another element in the sequence, a contradiction. 
\end{proof}

Several problems reduce to simplified cases of Long Choice where the predicates have fixed arity.
Define {\sc Unary Long Choice} to be the version of Long Choice where every predicate $P_i$ depends only on its last argument,
i.e., $P_i(a_0,\ldots, a_i,x) = P_i(x)$.
Define {\sc Binary Long Choice} to be the version of Long Choice where every predicate $P_i$ depends only on two of its arguments, the last argument $x$ and one of the previous $a_k$, i.e. $P_i(a_0,\ldots, a_i,x) = P_i(a_k,x)$ for some $k \leq i$.

It is easy to see that PWPP reduces to {\sc Unary Long Choice}:
Given a circuit $C$ for PWPP with $n$ input bits and $n-1$ output bits, define $P_i(x)$ to be the $(i+1)$-th bit of $C(x)$. Then in any valid certificate $a_0, \ldots, a_{n-1}, a_n$ for this instance of {\sc Unary Long Choice}, we must have $C(a_{n-1}) = C(a_n)$.

\begin{theorem}
	PWPP reduces to {\sc Unary Long Choice}.
\end{theorem}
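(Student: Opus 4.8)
The plan is to use exactly the construction sketched just above the statement. Given a PWPP instance, i.e.\ a circuit $C$ with $n$ input bits and $n-1$ output bits, I would build a {\sc Unary Long Choice} instance over the universe $U=\{0,1\}^n$ whose predicates simply read off the bits of $C$: for each $i\in\{0,\ldots,n-2\}$, let $P_i(a_0,\ldots,a_i,x)$ be the circuit that ignores all of its arguments except $x$ and outputs the $(i+1)$-th bit of $C(x)$. Each $P_i$ has the arity $(i+2)n$ demanded by the definition of {\sc Long Choice} and has size $\mathrm{poly}(n)$, since it is just $C$ with its output projected onto one coordinate; thus the map from PWPP instances to {\sc Unary Long Choice} instances is polynomial-time computable, and it produces exactly $n-1$ predicates, matching the $n-1$ output coordinates of $C$.

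Next I would unwind what a certificate buys us. By definition a valid {\sc Long Choice} certificate is a sequence of $n+1$ \emph{distinct} objects $a_0,\ldots,a_n$ with $P_i(a_0,\ldots,a_i,a_j)$ independent of $j>i$ for every $i\le n-2$; with our unary predicates this just says that $P_i(a_j)$ is the same for all $j>i$. I would then look at the last two objects $a_{n-1}$ and $a_n$: both indices $n-1$ and $n$ exceed every $i\in\{0,\ldots,n-2\}$, so $P_i(a_{n-1})=P_i(a_n)$ for all such $i$, which means the $(i+1)$-th output bit of $C(a_{n-1})$ equals that of $C(a_n)$ for $i=0,\ldots,n-2$. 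That accounts for all $n-1$ output coordinates, so $C(a_{n-1})=C(a_n)$; and since certificate entries are pairwise distinct, $a_{n-1}\neq a_n$. Hence $(a_{n-1},a_n)$ is a collision of $C$, a solution to the original PWPP instance, and it is read off the certificate in constant time.

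I do not expect a real obstacle: the reduction is a purely syntactic encoding and the correctness argument is a one-line check with no pigeonhole reasoning of its own (totality of the reduced instance is already guaranteed by the totality theorem for {\sc Long Choice}). The only thing to be careful about is the bookkeeping at the interface --- verifying that ``depends only on the last argument'' is consistent with $P_i$ formally taking $(i+2)n$ inputs (it is allowed to ignore the first $(i+1)n$ of them) and that exactly $n-1$ predicates are generated, so that every output coordinate of $C$ is pinned down by the agreement of $a_{n-1}$ and $a_n$. Degenerate small cases (e.g.\ $n\le 1$, where $C$ has no output bits) are vacuous or handled by trivial padding.
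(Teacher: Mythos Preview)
Your proposal is correct and follows essentially the same approach as the paper: define $P_i$ to output the $(i+1)$-th bit of $C(x)$, then observe that in any valid certificate $a_0,\ldots,a_n$ the last two entries $a_{n-1},a_n$ must agree under every $P_i$ and hence collide under $C$. The extra bookkeeping you include (arity, poly size, degenerate $n$) is harmless elaboration on the same idea.
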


\begin{proof}
	We are given an instance of PWPP($n, n-1$), defined by a circuit $C$: 
	
	$$C: \{0,1\}^n \mapsto \{0,1\}^{n-1} $$
	
	We wish to find a collision. 
	
	We construct a Unary Long Choice problem instance with  universe $\{0,1\}^n$.  For each $i$, define $P_i(a_0,\ldots,a_i,x)$ for every tuple of arguments to be the $(i+1)$-th bit of $C(x)$; thus, the value depends only on the last argument $x$.

	Consider a valid certificate $a_0, \ldots, a_n$ for this instance. Each predicate function $P_i$ enforces a condition on the elements $a_j$, where $j > i$. In particular, $P_i$ requires that for every $j, k > i$, the $(i+1)$-th bits of $C(a_j)$ and $C(a_k)$ agree. Now, consider the last two elements of our certificate: $a_{n-1}$ and $a_n$. Following the above reasoning, the predicate functions $P_0, \dots, P_{n-2}$ enforce that $C(a_{n-1})$ and $C(a_n)$ agree over all of their $n-1$ bits, implying that they collide. Thus, a certificate for our Long Choice problem provides us with a valid certificate for the original PWPP-complete problem.
\end{proof}

In the next section we will see that Ramsey problems reduce to {\sc Binary Long Choice}.

In the opposite direction, we can make the Long Choice problem harder by requiring the elements $a_i$ in the certificate to satisfy additional conditions, while still preserving the totality of the problem; for example we can require the $a_i$ to satisfy a given total order. In the proof of the totality of {\sc Long Choice}, we partition in each step the currently available set and pick an arbitrary element from the majority. We can instead pick a specific element, e.g. the smallest element under the given ordering. We call this generalization {\sc Long Choice with Order}; see the appendix for a formal definition.

\section{Long Choice and r-Color Ramsey}
Ramsey's theorem for multi-colored graphs states that for every number $r \geq 2$ of colors and every integer $n \geq 2$, there is a number $R(r,n)$ such that for every $r$-coloring of the edges of the complete graph on $R(r,n)$ nodes there is a monochromatic clique with $n$ nodes.   The standard Ramsey theorem corresponds to the case of $r=2$ colors.  A simple proof of the multi-colored Ramsey theorem uses the same type of iterative process as the $r=2$ case, except that in every step we partition the set of available nodes into $r$ groups instead of 2\footnote{Along the same lines, we could extend Long Choice to allow the functions $P_i$ to have a more general range $[r]$ instead of $\{0,1\}$.}; as before, we pick the largest group to continue the process. 
The bound on $R(r,n)$ from this simple proof is $R(r,n) \leq r^{rn}$. Obtaining better upper and lower bounds on $R(r,n)$ for $r=2$ and for general $r$ has been (and continues to be) the subject of a long line of intense research effort.

In the computational version of the problem, denoted {\sc $(r,n)$-Ramsey}, we are given the $r$-coloring of an exponentially large complete graph, which is specified via a poly-size circuit $C$, and are asked to find a monochromatic clique of size $n$.  The {\sc Ramsey} problem of the last section is equivalent to {\sc $(2,n)$-Ramsey}.
The number $r$ of colors in general need not be fixed, it could be a function of $n$.
Assume for simplicity that $r$ is power of 2 (otherwise, replace $\log r$ in the following by $\lceil \log r \rceil$).
Every node of the complete graph is represented by a unique $rn \log r$-bit string, and the given circuit $C$ takes as input two  $rn \log r$-bit strings (two nodes $u,v$) and outputs a $\log r$-bit string (the color of the edge $(u,v)$).

\begin{theorem}
	{\sc $(r,n)$-Ramsey} is in PLC for all $r, n$.
	In particular, {\sc $(r,n)$-Ramsey} reduces to {\sc Binary Long Choice}.
\end{theorem}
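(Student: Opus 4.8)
The plan is to imitate the standard iterative proof of the multicolour Ramsey theorem recalled above, but to realise each $r$-way split of the current ``live'' set by $\log r$ \emph{consecutive binary} splits, so that every predicate produced depends on only two of its arguments and the resulting instance is genuinely an instance of {\sc Binary Long Choice}. Write $\ell=\log r$ (read $\lceil\log r\rceil$ if $r$ is not a power of $2$), and recall that each node of the complete graph is an $N$-bit string with $N=rn\ell$, so the {\sc Long Choice} universe we build is $\{0,1\}^N$ and a valid certificate is a sequence $a_0,\dots,a_N$ of $N+1$ distinct nodes.

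Given the colouring circuit $C$, I would group the predicate indices $0,1,\dots,N-2$ into consecutive blocks of length $\ell$: block $t$ consists of the indices $t\ell,\,t\ell+1,\dots,\,t\ell+\ell-1$, and its \emph{leader} is the node $a_{t\ell}$, the analogue of the pivot vertex in the Ramsey argument. For an index $i=t\ell+s$ with $0\le s\le\ell-1$ I would define
$$P_i(a_0,\dots,a_i,x)\;=\;\bigl[\,\text{the $(s+1)$-th bit of }C(a_{t\ell},x)\,\bigr].$$
Since $t\ell\le i$, this predicate depends only on the single earlier argument $a_{t\ell}$ and on $x$, so the construction indeed produces a {\sc Binary Long Choice} instance; and each $P_i$ has polynomial size, being essentially one evaluation of $C$ together with selection of the correct input argument and output bit. (The few leftover indices forming the last, incomplete block may be given trivial predicates.)

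Next I would unwind the defining property of {\sc Long Choice} on a valid certificate $a_0,\dots,a_N$. For each block $t$ and each $s\le\ell-1$, the value $P_{t\ell+s}(a_0,\dots,a_{t\ell+s},a_j)$ is the same for all $j>t\ell+s$; letting $s$ run from $0$ up to $\ell-1$ pins down \emph{every} bit of $C(a_{t\ell},a_j)$ for all $j\ge(t+1)\ell$. Hence there is a colour $\gamma_t$ with $C(a_{t\ell},a_j)=\gamma_t$ for every $j\ge(t+1)\ell$; in particular, writing $w_t:=a_{t\ell}$, we get $C(w_t,w_{t'})=\gamma_t$ whenever $t<t'$ (since then $t'\ell\ge(t+1)\ell$). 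So for $t<t'$ the edge $\{w_t,w_{t'}\}$ has colour $\gamma_t$, exactly as in the vertex sequence built by the classical proof. The certificate is long enough to provide $rn-1$ fully-determined block-colours $\gamma_0,\dots,\gamma_{rn-2}$ (this is the point at which one verifies the arithmetic $N=rn\ell$ and absorbs the boundary effect of the last, incomplete block of predicates); since $rn-1>r(n-1)$, the pigeonhole principle gives a colour $\gamma$ occurring among them at least $n$ times, say at positions $t_1<\dots<t_n$. Then $\{w_{t_1},\dots,w_{t_n}\}$ is a monochromatic clique of size $n$: the leaders are distinct (the certificate elements are distinct), and for $a<b$ the edge $\{w_{t_a},w_{t_b}\}$ has colour $\gamma_{t_a}=\gamma$. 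Reading these $n$ nodes off the certificate is the solution map, so {\sc $(r,n)$-Ramsey} reduces to {\sc Binary Long Choice}, and hence lies in PLC since {\sc Binary Long Choice} is a restriction of {\sc Long Choice}.

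I do not expect a genuine obstacle here; the care is entirely in the bookkeeping. The points to get right are: choosing the leader to be the \emph{first} node of its block, so that the quantifier shift from ``$j>t\ell+s$'' to ``$j\ge(t+1)\ell$'' is clean; checking that $N=rn\ell$ leaves room for the roughly $rn$ block-colours that the final pigeonhole needs, and that the last incomplete block of predicates does no harm; and noting the degenerate case $r=2$, so $\ell=1$, where each block is a singleton and the construction collapses to the {\sc Binary Long Choice} encoding of ordinary Ramsey already promised in this section.
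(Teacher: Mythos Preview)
Your proposal is correct and follows essentially the same approach as the paper: the paper likewise groups the predicate indices into consecutive blocks of length $\log r$, uses the first element of each block as the pivot, sets $P_i$ to the appropriate bit of the colour $C(a_{t\log r},x)$, and then pigeonholes over the per-block colours $\gamma_t$ to extract a monochromatic $n$-clique among the block leaders. The only cosmetic difference is that the paper counts $r(n-1)+1$ usable block-colours rather than your slightly sharper $rn-1$, but both counts exceed $r(n-1)$ and so yield the needed $n$ repetitions.
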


\noindent
{\em Proof Sketch:} We describe first the proof for the case $r=2$, which simply follows the existence proof sketched in the previous section:
Given a 2-colored complete graph on $2^{2n}$ nodes, specified by a given circuit $C$, define the predicate $P_i$ for each $i$, to map any sequence $a_0, \ldots, a_i, x$ of nodes to the color, 0 or 1, of the edge $(a_i,x)$. Note that $P_i$ depends only on the last two arguments $a_i,x$.
Consider a valid certificate $a_0, \ldots, a_{2n}$ of this instance of Binary Long Choice.
For each $i =0, \ldots, 2n-2$, all edges $(a_i,a_j)$ for $j>i$ must have the same color, 0 or 1; assign this color to node $a_i$. At least $n$ of the $2n-1$ nodes $a_0, \ldots a_{2n-2}$ are assigned the same color.
These nodes induce a monochromatic clique of size $n$.

In the case of general $r$, given an $r$-colored
complete graph on $2^{rn \log r}$ nodes, define each function $P_i$ as follows.
Let $k= (\lfloor i/ \log r \rfloor) \log r$,
i.e, $k$ is the greatest multiple of $\log r$ that is $\leq i$.
Set $P_i(a_0, \ldots, a_i,x)$ to be the $(i-k+1)$-th bit of the color of the edge $(a_k,x)$.
Note that again all these predicates depend only on two arguments, $a_k$ and $x$. 

Consider a valid certificate $a_0, \ldots, a_{rn \log r}$ of this instance of Long Choice,
and let $b_i = a_{i \log r}$ for each $i$.
From the construction of the Long Choice instance,
it is easy to see that,
for each $i=0, \ldots , r(n-1)$, all edges $(b_i,b_j)$
for $j>i$ must have the same color:
note that the $t$-th bit of the color of edge $(b_i,b_j)$, for all $j>i$, is the value of $P_{l}(a_0, \ldots, a_l, b_j) = P_{l}(a_0, \ldots, a_l, a_{l+1})$ for $l= i \log r +t-1$.
Assign to each node $b_i$ the (common) color of the edges $(b_i,b_j)$, $j>i$.
There are $r(n-1) +1$ distinct nodes $b_0, \ldots, b_{r(n-1)}$,
each assigned one of $r$ colors, therefore at least $n$
of them are assigned the same color.
These nodes induce a monochromatic clique of size $n$.
\qed

\medskip

Finally, consider how Ramsey relates to PWPP. It was shown previously in \citep{Komargodski2019} that there exists a randomized reduction from PWPP to {\sc Ramsey},  as well as a deterministic reductions from PWPP to the multi-color Ramsey problem. In the appendix, we use properties of metric spaces to provide an alternative deterministic reduction from PWPP to multi-color Ramsey.

\section{Sunflowers}
An important aspect in extremal combinatorics is the extremal \textit{bound}: how large a system has to be to guarantee that the desired combinatorial structure exists. Often times, the tightest bounds are unknown - improving these bounds is an important research tradition in Combinatorics \citep{Alweiss2020} \citep{ConlonFerber2020}. 

However, despite this uncertainty, we can still use weaker extremal bounds to define provably hard TFNP search problems, which we can then relate to the subclasses of TFNP. In this section, we focus on the Sunflower Lemma \citep{jukna}, but this paradigm for reasoning about extremal problems can be applied to many other problems as well.

We begin with some basic definitions.

\begin{defi}
	A {\em $k$-set system} is a collection of distinct sets in which every set contains exactly $k$ elements. A collection of distinct sets $S_1, \dots, S_n$ is a {\em sunflower} if for all $i$, $j$, $S_i \cap {S_j}$ is the same.
\end{defi}

The Sunflower lemma states that for all positive integers $k, s$, there is a number $f(k,s)$ such that every $k$-set system of size $f(k,s)$ contains a sunflower of size $s$. The lemma was formulated and proved by Erd\H{o}s and Rado \cite{ErdosRadoSunflower} for $f(k,s)=k!(s-1)^{k+1}$, using an inductive proof that applies an iterative pigeonhole argument.
The conjecture is that $f(k,s) \leq C^k$ for some constant C that depends only on $s$. Progress on improving the upper bound on $f(k,s)$ was made recently in \cite{Alweiss2020}.

A computational problem based on the Sunflower Lemma was formulated in \citep{Komargodski2019},
and shown to be hard on average assuming the existence of collision resistant hash function.
Here, we use an even weaker bound to define a problem which we call {\sc Naive Sunflower}, and relate it to the multi-color Ramsey problem.

\begin{defi}[{\sc Naive Sunflower}]
	We are given a poly($k$)-sized circuit 
	$$C: \{0,1\}^{k^3 \log k} \mapsto (\{0,1\}^{k^3 \log k})^k $$
	which is supposed to specify a family of $2^{k^3 \log k}$ distinct sets of size $k$ over a universe of $2^{k^3 \log k}$ elements
	(each element is represented by
	 a $k^3 \log k$-bit string).   The problem is to find either: 
	\begin{enumerate}
		\item (An error): An index $i$ such that the set (represented  by) $C(i)$ contains two identical elements, or find two distinct indices $i$, $j$ such that the sets $C(i), C(j)$ are equal,or
		\item A sunflower of size $k^2$.
	\end{enumerate}
\end{defi}

Following the exact same argument presented in \citep{Komargodski2019}, this problem is hard on average assuming that Collision Resistant Hash Function families exist.

We will reduce {\sc Naive Sunflower} to Multi-color Ramsey by using a characterization of large sunflowers as a pairwise equidistant collection of points in a metric space. Given a $k$-set system $F$, define the distance $d(A,B)$ between any two sets $A, B$ of $F$ as $d(A,B) = |A \Delta B|/2$, where $A \Delta B = \{A\setminus{B} \}\cup \{B\setminus{A}\}$ is their symmetric difference
(it has even size since $A, B$ have the same size).
The function $d$ is a valid metric.
Clearly, any sunflower in $F$ is a set of pairwise equidistant points in this metric.

Conversely, by a result of Deza \citep{DezaEquidistant}, any collection of at least $k^2-k+2$ pairwise equidistant $k$-sets must form a sunflower. Therefore, in this regime, we can reduce the problem of finding a sunflower of size $k^2$ to the problem of finding $k^2$ pairwise equidistant sets. In turn, we can reduce this problem to the multi-color Ramsey problem.

\begin{theorem}
    {\sc Naive Sunflower} reduces to {\sc $(\sqrt{n},n)$-Ramsey}.
\end{theorem}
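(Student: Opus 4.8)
The plan is to set up a {\sc Multi-color Ramsey} instance whose nodes are the indices $i \in \{0,1\}^{k^3 \log k}$, whose "colors" encode the pairwise distance $d(C(i),C(j)) = |C(i) \Delta C(j)|/2$, and whose monochromatic cliques therefore correspond to pairwise-equidistant families of $k$-sets — which, once they are large enough, are sunflowers by Deza's theorem. I would first observe that for any two $k$-sets $A,B$ the distance $d(A,B)$ is an integer in $\{0,1,\ldots,k\}$, so there are only $k+1$ possible "colors". Setting $n = k^2$, so that $\sqrt{n} = k$ roughly matches the palette size (up to the usual rounding of $\log r$, which costs only constant factors and is absorbed exactly as in the earlier Ramsey section), a monochromatic clique of size $n = k^2$ in this coloring is a collection of $k^2$ indices all of whose sets are pairwise at the same distance. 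Since $k^2 \geq k^2 - k + 2$, Deza's theorem guarantees this equidistant family is a sunflower of size $k^2$, which is exactly what {\sc Naive Sunflower} asks for.

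Next I would handle the wrinkle that the reduction must be \emph{deterministic and total}: the circuit $C$ supplied to {\sc Naive Sunflower} is only "supposed" to specify distinct $k$-element sets, and may misbehave. So the edge-coloring circuit $C'$ I build must, on input $(i,j)$, first check whether $C(i)$ has a repeated element or whether $C(i) = C(j)$; if either faulty condition is detected, it outputs a designated sentinel color, and any edge receiving that color immediately yields a type-(1) certificate (an error) for {\sc Naive Sunflower}. Otherwise $C'$ outputs the genuine distance $d(C(i),C(j))$ encoded as a $\lceil \log(k+1) \rceil$-bit string (padded to $\log r$ bits with $r$ the power of two just above $k+1$, exactly as the $(r,n)$-Ramsey formulation handles non-powers of two). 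One must check the dimension count: {\sc $(r,n)$-Ramsey} wants a complete graph on $2^{rn\log r}$ nodes, and with $r \approx k$, $n = k^2$ this is $2^{\Theta(k^3 \log k)}$ nodes, matching the $2^{k^3 \log k}$ index space of {\sc Naive Sunflower} up to adjusting constants inside the exponent — I would pick the concrete parameters so the node set is exactly $\{0,1\}^{k^3\log k}$, taking $n$ to be the appropriate multiple of $k^2$ and noting this only strengthens the guaranteed clique. Finally, from a monochromatic clique returned by the Ramsey oracle, I read off the indices; if its color is the sentinel I extract the error, and otherwise I return the clique itself as the sunflower, having argued via Deza that equidistance plus size $\geq k^2 - k + 2$ forces the sunflower property.

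The main obstacle I expect is not conceptual but bookkeeping: making the parameters line up so that (a) the number of colors needed to encode distances in $\{0,\ldots,k\}$ plus the sentinel is at most $\sqrt{n}$ (or whatever $r$ the target $(r,n)$-Ramsey instance permits), (b) the required clique size $n$ in the Ramsey instance is at least $k^2 - k + 2$ so Deza applies, and (c) the node set of the Ramsey instance can be taken to be exactly the index set $\{0,1\}^{k^3 \log k}$ of the sunflower instance, so that cliques translate back directly. All three are satisfied by a suitable choice with $r$ of order $k$ and $n$ of order $k^2$, and the $\lceil\log r\rceil$ rounding is handled precisely as in the proof that {\sc $(r,n)$-Ramsey} reduces to {\sc Binary Long Choice}; I would state the chosen parameters explicitly and then the verification is routine. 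A secondary point worth making carefully is that $d$ is genuinely a metric on $k$-sets (symmetry and the triangle inequality for symmetric difference), though for this reduction only the facts that $d$ takes integer values in a bounded range and that equidistant families are sunflowers (Deza) are actually used.
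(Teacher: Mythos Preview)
Your proposal is correct and follows essentially the same route as the paper: color edge $(i,j)$ by the distance $d(C(i),C(j))$, handle errors by a distinguished color, then invoke Deza on the returned monochromatic clique of size $k^2$. The only cosmetic difference is that the paper folds the error cases into color~$1$ rather than using a separate sentinel, which lets it use exactly $k$ colors so that $r=k=\sqrt{n}$ and the node count $2^{rn\log r}=2^{k^3\log k}$ matches the {\sc Naive Sunflower} index space on the nose without any rounding.
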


\begin{proof}
    Consider an instance of {\sc Naive Sunflower}, given by a circuit $C$. As above, $C$ is supposed to define a set system consisting of $2^{k^3 \log k}$ sets, where each set contains $k$ elements. The core idea behind our reduction is simple: we construct a graph consisting of $2^{k^3 \log k}$ nodes, where the $i$-th node $u_i$ corresponds to the set $C(i)$. To color the edges of this graph, we use $k$ colors, given by the numbers $1, 2, \dots, k$. Every edge $(u_i,u_j)$ is colored as follows. If $C(i), C(j)$  are distinct $k$-sets, then assign color $d(C(i),C(j))$ to the edge $(u_i,u_j)$.
    If one of $C(i), C(j)$ is not a $k$-set, e.g. contains a duplicate element, or if the sets are equal, then assign color 1 to the edge $(u_i,u_j)$.
    
    Let $n = k^2$. Since the graph has 
    $2^{k^3 \log k} = k^{nk}$ nodes, it contains a monochromatic clique of size $n = k^2$
    by Ramsey's Theorem. Let $M$ be any such monochromatic clique of size $k^2$.
    If $M$ contains a node $u_i$ such that $C(i)$ is not a $k$-set, or if it contains two nodes $u_i, u_j$ such that $C(i), C(j)$ are equal sets, then we have a violation for the circuit $C$.
    Otherwise, the collection $\{ C(i) | u_i \in M \}$ has $k^2$ pairwise equidistant sets, and thus by Deza's theorem, they form a sunflower.
\end{proof}

We have shown earlier that {\sc ($r,n$)-Ramsey} is in PLC for any $r,n $; thus we can conclude: 

\begin{corollary}
    {\sc Naive Sunflower} is in PLC.
\end{corollary}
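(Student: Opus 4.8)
The plan is to obtain the corollary by composing two reductions already established in the excerpt. Recall that PLC is defined as the class of all search problems that reduce (in polynomial time) to {\sc Long Choice}, so PLC is closed under polynomial-time many-one reductions, and reductions compose. Hence it suffices to exhibit a chain of reductions from {\sc Naive Sunflower} down to {\sc Long Choice}.

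First I would invoke the immediately preceding theorem, which gives a polynomial-time reduction from {\sc Naive Sunflower} to {\sc $(\sqrt{n},n)$-Ramsey} (here $n = k^2$, so $\sqrt n = k$ is exactly the number of colors used in the metric-space coloring of the set system). Next I would apply the theorem stating that {\sc $(r,n)$-Ramsey} is in PLC for all $r,n$, instantiated at $r = \sqrt n$: that theorem in fact produces a reduction from {\sc $(\sqrt n, n)$-Ramsey} to {\sc Binary Long Choice}, and {\sc Binary Long Choice} is by definition a special case of {\sc Long Choice}. Composing these, {\sc Naive Sunflower} reduces to {\sc Long Choice}, so {\sc Naive Sunflower} $\in$ PLC.

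There is essentially no hard step here; the only things to check are bookkeeping: that the parameter substitution $r=\sqrt n$ keeps the Ramsey instance of polynomial size relative to the original {\sc Naive Sunflower} circuit (it does, since the underlying universe has $2^{k^3 \log k}$ elements and the color circuit is poly$(k)$-sized by construction in the reduction), and that the composition of two polynomial-time reductions is again polynomial time. If one wanted a fully self-contained argument one could also unfold the three reductions into a single direct reduction from {\sc Naive Sunflower} to {\sc Binary Long Choice}, but appealing to closure of PLC under composition is cleaner.
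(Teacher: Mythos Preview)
Your proposal is correct and follows exactly the same approach as the paper: the paper's proof is simply the one-line observation that {\sc Naive Sunflower} reduces to {\sc $(\sqrt{n},n)$-Ramsey} by the preceding theorem, and {\sc $(r,n)$-Ramsey} is in PLC for all $r,n$ by the earlier result. Your additional bookkeeping about closure under composition and parameter sizes is accurate and just makes explicit what the paper leaves implicit.
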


\section{Short Choice}

As we have seen above, PPP is contained in the class PLC. There is an intuitive reason for this: implicit in any PPP-complete problem is an iterated pigeonhole argument.

The class PEPP, introduced in \citep{Kleinberg2020TFNP}, embodies the dual of the class PPP - an anti-pigeonhole principle: if there are $2^n - 1$ pigeons and $2^n$ holes, then no matter how the pigeons are placed, there must be an empty hole. PEPP belongs to the class \textbf{TF$\Sigma_{2}$P}, which is believed to lie outside of the class $NP$.

While the existence proof for Pigeonhole Circuit has a majority argument, the existence proof for Empty has a corresponding ``minority" argument. Suppose we are given an instance of the PEPP-complete problem \textit{EMPTY}: we are given a poly(n)-sized circuit $C: [2^n - 1] \mapsto [2^n] $,
where the inputs and outputs are all represented concisely using exactly $n$ bits. The challenge here is to find an element $j$ in the range such that there is no $i$ with $C(i) = j$. There must exist a bit $c_1$ such that the minority of elements in the domain of $C$ map to a $n$-bit string whose first bit is $c_1$. This minority has size \textit{at most} $2^{n-1} - 1$. Among the elements in this minority, there must exist a bit $c_2$ such that the minority of these elements map to a $n$-bit string whose second element is $c_2$. This new minority has size \textit{at most} $2^{n-2} - 1$. If we continue this argument $n$ times, we find that there exists a bit string $c_1 \circ c_2 \circ \dots \circ c_n$ which is not in the image of $C$.

As discussed earlier, {\sc Long Choice} is a generalization of PPP that encapsulates the iterated majority arguments. We can define an analogous generalization of the class PEPP that encapsulates the iterated minority argument. For this, we introduce a problem called {\sc Short Choice}, a problem in \textbf{TF$\Sigma_{2}$P} which is the dual of {\sc Long Choice}. 

\begin{defi}[Subcertificate]
    Given a Long Choice problem instance defined by predicate functions $P_0, P_1, \dots, P_{n-2}$ we call a subcertificate a sequence of distinct elements $a_0, a_1, \dots, a_k$ ($k \leq n$) which satisfy the {\sc Long Choice} conditions imposed by the predicate functions $P_i$. That is, for each predicate function $i < k$, we require that $P_i(a_0, \dots a_{i}, a_j) $ 
    is the same for all $j > i$. 
\end{defi}

\begin{defi}[Problem: {\sc Short Choice}]
	The input is the same as in the {\sc Long Choice} problem, except that the universe $U$ has  now $2^n - 2$ objects. As in {\sc Long Choice}, we are given a sequence of $n-1$ poly($n$)-sized circuits, $P_0, \dots, P_{n-2}$, where each $P_i$ defines a predicate function $ P_i : U^{i+2} \mapsto \{0,1\}$ 
	
	The problem is to find  a sequence $a_0, a_1, \dots, a_{k}$ of at most $ n-1$ distinct objects in $U$ and a bit $c \in \{0, 1\}$ with the property that (1) the sequence $a_0, a_1, \dots, a_{k}$ is a subcertificate, and (2)  there does not exist any object $a_{k+1} \in U$ that both extends this subcertificate and has $P_{k}(a_0, a_1, \dots, a_{k}, a_{k+1}) = c$. 
\end{defi}

As in our discussion of {\sc Long Choice}, we can show that {\sc Short Choice} is both total and PEPP-hard. The proof of totality uses a repeated minority argument, in the same way that the proof for Long Choice used a repeated majority argument. And the proof of PEPP-hardness is along similar lines as the PPP-hardness proof for {\sc Long Choice}.

\begin{theorem}
    {\sc Short Choice} is a total problem.
\end{theorem}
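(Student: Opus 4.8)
The plan is to dualize the totality proof of {\sc Long Choice}, replacing its ``pick from the majority'' step by ``pick from the minority'', exactly as the PEPP existence argument dualizes the pigeonhole majority argument. I would construct a solution greedily. After choosing distinct $a_0,\dots,a_k$ forming a subcertificate, let $W_k \subseteq U$ be the set of all valid one-step extensions, i.e. the set of $x \in U\setminus\{a_0,\dots,a_k\}$ with $P_i(a_0,\dots,a_i,x)=P_i(a_0,\dots,a_i,a_{i+1})$ for every $i<k$. Unwinding the definition of subcertificate for a length-$(k+2)$ sequence shows that these are precisely the conditions the subcertificate requirement adds beyond those already met by $a_0,\dots,a_k$ (the condition coming from $P_k$ on a two-element tail being vacuous); hence ``$x$ extends the subcertificate'' is the same as ``$x\in W_k$''.

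First I would start with an arbitrary $a_0$, so $W_0 = U \setminus\{a_0\}$ and $|W_0| = 2^n-3$. At stage $k$, split $W_k = W_k^0 \sqcup W_k^1$ according to the value of $P_k(a_0,\dots,a_k,\cdot)$ and let $c$ be the bit indexing the smaller part (ties broken arbitrarily), so $|W_k^c|\le\lfloor |W_k|/2\rfloor$. If $W_k^c=\emptyset$ --- which includes the case $W_k=\emptyset$ --- then no $a_{k+1}\in U$ both extends the subcertificate and has $P_k(a_0,\dots,a_k,a_{k+1})=c$, so $(a_0,\dots,a_k;\,c)$ is a valid {\sc Short Choice} solution and we halt. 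Otherwise pick any $a_{k+1}\in W_k^c$; a direct check gives that $a_0,\dots,a_{k+1}$ is again a subcertificate and that $W_{k+1}=W_k^c\setminus\{a_{k+1}\}$, hence $|W_{k+1}|\le\lfloor|W_k|/2\rfloor-1$.

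I would close with the size bound. Combining the recursion with $|W_0|=2^n-3$ --- here the precise universe size $2^n-2$ is essential, just as the domain size $2^n-1$ is in the PEPP-complete problem {\it EMPTY} --- one proves by induction that $|W_k|\le 2^{n-k}-3$ for every stage the process reaches. In particular $|W_{n-2}|\le 1$, so the minority part $W_{n-2}^c$ is empty and the process is forced to halt at some stage $k\le n-2$; the returned sequence then has $k+1\le n-1$ distinct objects and only uses predicates among $P_0,\dots,P_{n-2}$, which is exactly what {\sc Short Choice} requires. This establishes totality (membership in TF$\Sigma_2$P having already been recorded: a claimed solution is verified by checking the polynomially many subcertificate conditions and then the $\forall$-quantified nonexistence clause over the exponentially large $U$).

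The step I expect to require the most care is not the arithmetic but the bookkeeping around the subcertificate definition: confirming that ``$x$ extends the subcertificate'' coincides with ``$x\in W_k$'', that selecting $a_{k+1}$ from $W_k^c$ preserves the subcertificate property (and yields $W_{k+1}=W_k^c\setminus\{a_{k+1}\}$), and --- most importantly --- that halting at stage $k$ with the minority bit $c$ genuinely witnesses condition (2) of {\sc Short Choice}. Once those identifications are pinned down, the rest is the familiar iterated-halving computation, with the only subtlety being the off-by-one accounting that makes the bound $2^{n-k}-3$ go through.
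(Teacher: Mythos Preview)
Your proposal is correct and follows essentially the same iterated-minority argument as the paper's proof: your set $W_k$ of valid one-step extensions is exactly the paper's $U_k\setminus\{a_k\}$, and your bound $|W_k|\le 2^{n-k}-3$ is the paper's $|U_k|\le 2^{n-k}-2$ shifted by one. The bookkeeping points you flag (that ``extends'' coincides with membership in $W_k$, that $W_{k+1}=W_k^c\setminus\{a_{k+1}\}$, and that an empty minority witnesses condition~(2)) are precisely the properties the paper isolates and verifies, so there is no substantive difference in approach.
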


\begin{proof}
    Let $U$ denote the set of $2^n - 2$ objects in the universe of this Short Choice problem instance. We will construct a certificate, given by a subcertificate $a_0, a_1, \dots, a_{k}$, $(k \leq{n-2})$ and a bit $c$. In order to construct this sequence, we will also define a sequence of nonempty sets $U_0 \supset U_1 \supset \dots \supset U_{k}$ with the following properties: 
    \begin{enumerate}
        \item $U_0 = U$
        \item $|U_k| \leq 2^{n - k} - 2$ for all $k$
        \item $a_k \in U_k$ for all $k$ \textit{and} whenever $k\geq{1}$, $a_0, \dots, a_{k-1} \not\in U_{k}$.
        \item For every $j \leq k$, an element $x$ extends the subcertificate $a_0, a_1, \dots, a_j$ if and only if $x\in U_j \setminus \{a_j \}$. 
        \item For all $x\in{U_{j+1}}$, $P_j(a_0, \dots, a_{j}, x)$ is the same.
        
    \end{enumerate}
    
    As a base case, we begin by defining $U_0 := U$, and we pick an arbitrary element $a_0\in{U_0}$. Note that the base case of our construction so far satisfies properties $1$ and $2$: $U_0 = U$, $|U_0| = |U| \leq 2^{n - 0} - 2$. Property $3$ holds because $a_0 \in U_0$ (the second condition in property $3$ is vacuously true here). Property $4$ holds trivially, since any element $x \neq a_0$ must belong to $U_0 \setminus a_0$. 
    
    We now define $U_1$ in such a way that our base case satisfies property $5$ as well. We partition the elements $x\in \{U_0\setminus{ \{a_0 \}  } \}$ in two groups based on the value $P_0(a_0, x)$. We know that $|U_0\setminus{ \{a_0 \}  }| = 2^n - 3 $, and since this set is being partitioned into two disjoint sets, the minority must have size at most $2^{n-1} - 2$. If the minority is nonempty, we define $U_1$ to be the minority (which guarantees condition $5$ holds). We also define $a_{1}$ as an arbitary element of $U_1$. Otherwise, if the minority is empty, this means that $P_0(a_0, x)$ takes on a constant value (call it $b$) for all $x\in U\setminus \{a_0 \}$. This in turn means that there is no value $x$ with $P_0(a_0, x) = \neg b$ (the opposite of bit $b$). We can thus return $a_0$ along with the bit $\neg b$ as a valid certificate to the Short Choice problem. This completes the base case. 
    
    Now, suppose for some $0 \leq j \leq {k}$ we have defined the nonempty sets $U_0 \supset U_1 \supset \dots \supset U_j$ and the subsequence $a_0, a_1, \dots, a_{j}$ in such a way that they satisfy the above properties. We can first conclude based on properties $3$ and $5$ that the sequence $a_0, \dots, a_j$ is a valid subcertificate. Furthermore, an element $x$ extends this subcertificate if and only if it belongs to $U_j\setminus \{ a_j \}$, according to property $4$.

    To continue our inductive construction, we partition the elements $x\in \{ U_j \setminus \{a_j \} \}$ based on the  value of $P_j(a_0, \dots, a_j, x)$. If one side of this partition is empty, that means that there exists a bit $c\in{ \{0, 1\} }$ such that there are no elements $x\in \{ U_j \setminus \{a_j \} \}$ where $a_0, a_1, \dots, x$ is a subcertificate and $P_j(a_0, \dots, a_j, x) = c$. In this case, we are done: the sequence $a_0, \dots, a_j$ along with the bit value $c$, serves as a certificate to the problem. 
    
    Otherwise, both sides of this partition are nonempty. In this case, we can continue the inductive construction: we pick the minority side of the partition of the elements $x\in \{ U_j \setminus \{a_j \} \}$. We then define $U_{j+1}$ to be all of the elements on the minority side of the partition, and pick an arbitrary element $a_{j+1}\in{U_{j+1}}$ to extend the  subcertificate. We know by the inductive hypothesis that $|U_j| \leq 2^{n -j} - 2$, and by the same Pigeonhole argument used in the base case, we know that $|U_{j+1}| \leq 2^{n -j} - 2$, as desired. Thus properties $1$ and $2$ hold. To see why property $3$ holds, note that $a_{j+1} \in U_{j+1}$. Furthermore, $a_0, \dots, a_{j-1} \not\in U_j \supset U_{j+1}$. Finally, $a_j \not\in U_j$, as stated above. To see why property $4$ must hold, consider any candidate element $a_{j+2}$ which may extend the Long Choice subcertificate. We know from the inductive hypothesis that $a_{j+2}$ must belong to $U_j \setminus a_j$. Additionally, we must also now have that

    $$P_j(a_0, \dots, a_j, a_{j+1}) = P_j(a_0, \dots, a_{j}, a_{j+2})$$
    
    By the definition of $P_j$, this means that $a_{j+1}$ and $a_{j+2}$ both belong to $U_{j+1}$, and since $a_{j+1}$ and $a_{j+2}$ must be distinct in order to be in the same subcertificate, we conclude $a_{j+2}\in U_{j+1}\setminus a_{j+1}$. This argument also shows that property $5$ continues to hold as well. 
    
    Note that our construction can only proceed until $j = n-2$. To see why, suppose that $j = n-2$. Then, by our inductive hypotheses, $|U_j| \leq 2^{n-(n-2)} - 2 = 2$. Furthermore, since $a_j \in U_j$ by our assumption, we know that $U_j \setminus \{ a_j \}$ has at most $1$ other element. Denote this element (if it even exists) by $a_{n-1}$, and consider the value $c_{bad} = P_{n-2}(a_0, \dots, a_{n-2}, a_{n-1})$. It is clear that if we consider $c$ to be the opposite bit of $c_{bad}$, there are no elements $x$ which extend this subcertificate with $P_{n-2}(a_0, \dots, a_{n-2}, x) = c$. Thus, we can return $a_0, \dots, a_{n-2}$ and $c$ and we are done. 
    
    Therefore, our construction is guaranteed to terminate with a subcertificate $a_0, \dots, a_j$ of the appropriate length, as well as a bit $c$, which provide us with a solution to the Short Choice problem.  
\end{proof}

\begin{theorem}
   {\sc  Short Choice} is PEPP-hard
\end{theorem}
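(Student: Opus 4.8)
The plan is to reduce from {\sc Empty}, the PEPP-complete problem, following and dualizing the PPP-hardness argument for {\sc Long Choice}. Given an {\sc Empty} instance $C_0 : [2^m-1] \to [2^m]$, the first step is a padding of the hard instance, playing the role of the ``$C$ avoids $0$'' massaging used in the {\sc Collision} reduction. Set $n = m+1$ and define $C : [2^n-2] \to [2^n-1]$ so that the first $2^m-1$ inputs (call them \emph{real}) map via $C_0$ into the block $R = \{1,\dots,2^m\}$ (the \emph{real range}) and the remaining $2^n-2^m-1$ inputs (\emph{dummy}) map bijectively onto $\{2^m+1,\dots,2^n-1\}$ (the \emph{dummy range}). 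The counts match exactly, since $(2^n-2)-(2^m-1) = 2^n-2^m-1 = (2^n-1)-2^m$, so every dummy value is hit; hence the unhit values of $C$ are precisely those of $C_0$, and they all lie in $R$. The domain of $C$ has exactly $2^n-2$ elements, so it serves as the universe $U$ of a {\sc Short Choice} instance with parameter $n$.

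Next, build the predicates $P_0,\dots,P_{n-2}$ exactly as in the {\sc Long Choice} PPP-hardness reduction, now read ``dually''. Maintain nested intervals $B_0 \supseteq B_1 \supseteq \cdots$ inside the range, with $B_0 = [1,2^n-1]$; given $a_0,\dots,a_i$, let $u_i = |B_i \setminus \{C(a_0),\dots,C(a_i)\}|$ be the size of the unfilled set of $B_i$, let $F_i = B_i[k_i]$ be the shortest prefix of $B_i$ whose unfilled set has size $\lceil u_i/2\rceil$, and set $P_i(a_0,\dots,a_i,x) = 1$ iff $C(x) \in F_i$; then put $B_{i+1} = F_i$ if $C(a_{i+1}) \in F_i$ and $B_{i+1} = B_i \setminus F_i$ otherwise. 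This is $\mathrm{poly}(n)$-time computable, since $F_i$ and $B_i$ are intervals obtained by evaluating $C$ on $a_0,\dots,a_i$ and binary-searching for $k_i$, only the $\le n$ relevant image points entering.

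For correctness one dualizes the two invariants of the {\sc Long Choice} proof: for every valid subcertificate $a_0,\dots,a_k$ one shows by induction that (1) $C(a_j) \in B_i$ for all $i \le j \le k$, and (2) $u_i \ge 2^{n-i}-2$ for all $i \le k$. Invariant (2) is now an inequality (the $C(a_j)$ need not be distinct, which can only increase $u_i$); its inductive step uses that the hole-balanced split leaves $\lceil u_i/2\rceil$ and $\lfloor u_i/2\rfloor$ unfilled spots on the two parts of $B_i$, that $B_{i+1}$ is one of these parts, and that $C(a_{i+1}) \in B_{i+1}$ deletes at most one further unfilled spot — and since $2^{n-i}-2$ is even the bound propagates with no accumulating loss. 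In particular $u_k \ge 2^2-2 = 2$ for every $k \le n-2$, so both $F_k$ and $B_k\setminus F_k$ contain an unfilled spot. Given a {\sc Short Choice} solution $(a_0,\dots,a_k,c)$, the defining ``dead-end'' condition translates, exactly as the extension conditions do in the {\sc Long Choice} analysis, into: no input outside $\{a_0,\dots,a_k\}$ has $C$-image in $G_c$, where $G_c = F_k$ if $c = 1$ and $G_c = B_k\setminus F_k$ if $c = 0$. Pick the smallest $v \in G_c$ not among $C(a_0),\dots,C(a_k)$; it exists by (2) and is found in polynomial time since $\{C(a_0),\dots,C(a_k)\}$ has at most $n-1$ elements, so $v$ lies among the first $n$ elements of the interval $G_c$ (or among all of $G_c$ if $|G_c|<n$). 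As $v\in G_c$ is not an image of any $a_j$ and no other input maps into $G_c$, the value $v$ is not in the image of $C$; by the padding $v \in R$ is an unhit value of $C_0$, which we output. Since $n=m+1$, the reduction is polynomial.

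The place where this dual argument is genuinely different from {\sc Long Choice}, and the part I expect to need the most care, is that a {\sc Short Choice} universe of size $2^n-2$ comes with only $n-1$ predicates, so the interval ``drill-down'' stops one level short of isolating a single range value, and one cannot afford to locate the unhit value by brute-force search. The resolution is exactly the translation of the dead-end condition above: it confines \emph{all} $C$-preimages of $G_c$ to the polynomially many chosen elements $a_0,\dots,a_k$, so ``unfilled relative to $a_0,\dots,a_k$'' already certifies ``outside the image of $C$'', while the hole-balanced split together with invariant (2) guarantees such a value survives in $G_c$. Making the padding have precisely the right set sizes — so that no spurious unhit value is introduced and the base case $u_0 = 2^n-2$ comes out even — is what keeps the counting tight, and one should also verify the degenerate ``$|B_i|=1$'' rule of the {\sc Long Choice} construction, which never fires here since $|B_i| \ge u_i \ge 2$ throughout.
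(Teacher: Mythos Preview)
Your proof is correct and follows essentially the same route as the paper: the predicates are the same interval-bisection scheme (balanced on unfilled spots), you prove the same two invariants ($C(a_j)\in B_i$ and a lower bound $u_i\ge 2^{n-i}-2$), and you read off an unhit value from the half $G_c$ singled out by the dead-end bit. The only differences are cosmetic or expository: (i) you insert an explicit padding step taking the standard {\sc Empty} instance $C_0:[2^m-1]\to[2^m]$ to $C:[2^n-2]\to[2^n-1]$ with $n=m+1$, whereas the paper simply starts from an {\sc Empty} instance already of the latter shape; and (ii) you spell out more carefully than the paper does why the dead-end condition on the subcertificate really implies that \emph{no} input outside $\{a_0,\dots,a_k\}$ lands in $G_c$ (equivalently, that $x$ extends the subcertificate iff $C(x)\in B_k$), which is exactly the inductive characterization you sketch and which the paper uses implicitly.
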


\begin{proof}
    Consider an instance of the PEPP-complete problem, {\sc Empty}, which is given by a poly(n)-sized circuit $C$: 
    
    $$C: [2^n-2] \mapsto [2^n-1] $$
    
    where the challenge is to find an element, $e\in{[2^n-1]}$ such that for all $i\in{[2^n - 2]}$, $C(i) \neq {e}$.

    We now define a {\sc Short Choice} instance whose solution allows us to recover a solution to the {\sc Empty} problem.
    
    The universe, $U$, of this instance consists of the elements in the set $[2^n - 2]$ (the domain of $C$). It now suffices to define a sequence of predicate functions, $P_0, \dots, P_{n-2}$. 

    Each predicate function $P_i$ takes as input the distinct elements $a_0, a_1, \dots, a_i, x$ and follows a similar procedure to the proof of PPP-hardness of {\sc Long Choice},. First, it calculates a set $H_i$ of elements belonging to the range of $C$. This set $H_i$ has the form $[x_i, y_i]\setminus \{C(a_0), C(a_1), \dots, C(a_i) \}$. $P_i$ then calculates the \textbf{midpoint} of $H_i$, defined as the smallest value $k\in{H_i}$ such that half of the elements of $H_i$ are less than or equal to $k$. $P_i$ returns $1$ if $C(x)$ is less than or equal to the midpoint of $H_i$ and $0$ otherwise.
    
    It now remains to define the sets $H_i$. The sets $H_i$ are defined inductively with the following two key properties:
    
    \begin{enumerate}
        \item $|H_i| \geq 2^{n-i} - 2$
        \item If $a_0, a_1, \dots, a_i$ is a Long Choice subcertificate, then for every $k \in \{0, 1, \dots, i-1 \}$, $C(a_j)\in H_{k} \bigcup \{C(a_0), C(a_1), C(a_2), \dots, C(a_k) \}$ whenever $j > k$. 
    \end{enumerate}
    
    As a base case, given a first input of $a_0$, $P_0$ defines $H_0 := [2^n - 1] \setminus \{ C(a_0) \}$. This satisfies the first inductive property: $[2^n - 1] \setminus \{ C(a_0) \}$ has $2^{n-0} - 2$ elements. 
    
    To see why $H_0$ satisfies the second inductive property, note that $H_0 \bigcup C(a_0)$ is actually the set $[2^n - 1]$ - the range of $C$! Therefore, for any  subcertificate, $a_0, a_1, \dots, a_i$, we will have, for all $k \in \{1, \dots, i\}$, that $C(a_k) \in H_0 \bigcup C(a_0)$, as desired.

    Now, suppose that the $H_0, \dots, H_i$ have been defined in such a way that they satisfy the two inductive properties. Suppose $P_i$ takes as input a valid  subcertificate $a_0, \dots, a_i$, along with a final element $a_{i+1}$. Then, $P_i$ first uses the elements $a_0, \dots, a_i$ to define $H_i$. Next, it calculates the midpoint of $H_i$. Then, $P_i(a_0, a_1, \dots, a_i, a_{i+1}$ returns $1$ if $C(x)$ is less than or equal to the midpoint and $0$ otherwise. Accordingly, we define $H_{i+1}$ as follows:
    \begin{enumerate}
        \item If $P_i(a_0, \dots, a_i, a_{i+1})$ is $1$, then we define $H_{i+1}$ to be the elements of $H_i\setminus{C(a_{i+1})}$ which are less than or equal to the midpoint of $H_i$ 
        \item Otherwise, we define $H_{i+1}$ to be the elements of $H_i\setminus{C(a_{i+1})}$ which are greater than the midpoint of $H_i$
    \end{enumerate}
    
    To prove that the first inductive property holds for $H_{i+1}$, recall that by our inductive assumption, $H_{i}$ contains at least $2^{n-i} - 2$ elements. By definition, the midpoint will split this set into two sets, $A$ and $B$, each of size at least $2^{n-i-1} - 1$. $H_{i+1}$ is defined as either $A \setminus C(a_{i+1})$ or $B \setminus C(a_{i+1})$. Thus, in either case, $H_i$ has at least $2^{n-i-i} - 2$ elements, as desired. 
    
    To prove that the second property holds, assume without loss of generality that $C(a_i)$ is greater than the midpoint of $H_{i-1}$. Then, as described above, we define $H_i$ to be the subset of $H_{i-1}\setminus C(a_i)$ containing elements greater than the midpoint of $H_{i-1}$. 
    
    Now, in order for $P_{i-1}(a_0, a_1, \dots, a_{i-1}, a_j)$ to be constant for all $j \geq i$, we must have that $C(a_j)$ is also greater than the midpoint of $H_{i-1}$. However, by the second inductive assumption, we know that $C(a_j)$ must belong to $H_{i-1} \bigcup \{C(a_0), \dots, C(a_{i-1}) \}$. If we apply these two facts together, we conclude that $C(a_{j})$ must belong to the set $H_{i} \bigcup \{a_0, \dots, a_{i-1}, a_i \}$. Letting $j = i+1$ in our case proves that the second inductive property holds.

    Finally, consider any certificate to this instance. It consists of a  subcertificate $a_0, \dots, a_j$ ($j \leq n-2$) and a bit, $c$. As we know, this bit $c$ has the following property: there is no object $a_{j+1}$ which both extends the certificate and also has $P_{j}(a_0, \dots, a_{j}, a_{j+1}) = c$. 
    
    Consider the set $H_{j}$; it has size at least $2^{n - (j-2)} - 2 \geq 2^{n - (n-2)} - 2 = 2$. Thus, the midpoint of $H_{j}$ splits $H_{j}$ into two nonempty subsets of size at least $1$. Let $A$ denote the subset of $H_{j}$ containing elements less than or equal to its midpoint, and let $B$ denote the subset of $H_{j}$ containing elements greater than its midpoint. Suppose that $c = 0$. Then we can conclude that there are no elements $C(a_{j+1}) \in B$. If there were, we could pick such an element to extend the Long Choice sequence. Thus any element of $B$ would be a solution to our problem. Similarly, if $c = 1$; then we can similarly conclude that there are no elements $C(a_{j+1}) \in A$; thus any element of $A$ would be a solution to our problem. Finally, note that we can easily identify which elements belong to the set $A$ and $B$; as mentioned earlier, they take the form $[x_i, y_i]\setminus \{C(a_0), C(a_1), \dots, C(a_i) \}$. This completes the proof. 
\end{proof}

We can define the class PSC (Polynomial Short Choice) to be the class whose complete problem is {\sc Short Choice}.

\section{K\"onig and Erd\H{o}s-Ko-Rado}

In this section we introduce and characterize computational problems associated with two classical theorems in combinatorics.

\subsection{K\"onig}

K\"onig's lemma states that in every infinite connected graph with finite degree there is an infinite simple path starting at every node.
The lemma is often stated and used for trees: Every infinite (rooted) tree with finite branching has an infinite path (starting at the root).
The finite version of the lemma is that every large enough connected graph (or tree) with bounded degree contains a long path. For example, every rooted binary tree with $2^n$ nodes contains a path of length $n$.
The graph version follows easily from the tree version: Given a connected graph, take a spanning tree of the graph, for example a breadth-first-tree from an arbitrary node.

The standard proof of K\"onig's tree lemma (both the infinitary as well as the finitary version) is by the same type of repeated majority argument as in the proof of totality for Long Choice. 
Starting from the root of the tree, proceed to the
child whose subtree contains the largest number of nodes, and repeat the process from there.
If the tree is infinite, one of the children must have an infinite subtree (since the degree is finite), thus this process will generate an infinite path. Similarly, in the finite case, every iteration reduces the number of nodes at most by a factor of $d$ (the degree), so the process generates a path of logarithmic length.

Given a (succinctly represented) exponentially large connected graph or tree with bounded degree, e.g. a binary tree, how hard is it to find a long simple path? We formulate this problem below for binary trees, represented through the parent information.  

\begin{defi}[Problem: \sc{K\"onig}]
	We are given a poly(n)-sized circuit, $P:\mathbb \{0,1\}^{n}\to\mathbb \{0,1 \}^{n} \Cross \{0,1\} $.
	This circuit is supposed to define a rooted binary tree on $2^n$ nodes, where each node is encoded by a $n$-bit string. It does so by defining a parent relation: for a node $u$, $P(u)$ is an ordered pair $(v, b)$ where $v$ is the (binary encoding of) the parent of $u$, and $b$ is a bit which indicates whether $u$ is the left or right child of $v$. 
	If $P(u) = u$, that means that $u$ does not have a parent (it is a root node). In the K\"onig problem, we are given $P$ and a root node, $r$, and are asked to find either a violation ($P$ does not specify a binary tree rooted at $r$) or find a path of length $n$. Specifically, return one of the following certificates: 
	
	\begin{enumerate}
		\item \textbf{Identical children: } Return $2$ distinct nodes $a$ and $b$ with the property that $P(a) = P(b)$. (That is, they are both left children or right children of the same node). 
		\item \textbf{Invalid Root: } Return $r$ if $P(r)\neq{r}$.
		\item \textbf{Non-Unique Root: } Return a node $s\neq{r}$ with the property that $P(s) = s$.
		\item \textbf{Far Away Node: } Return a node $s$ with the following property: if we apply the parent operator $P$ to $s$ a total of $n$ times, we do not reach the root $r$. 
		\item \textbf{Long Path: } A sequence of $n+1$ nodes $a_0, a_1, \dots, a_{n}$ with $a_0 = r$, and the property that $P(a_i) = a_{i-1}$ for all $i\geq{1}$. Note that it suffices to provide $a_{n-1}$ as a valid certificate here; the rest of the path can be recovered by applying the parent operator $P$. 
	\end{enumerate}
\end{defi}	
We refer to the circuit $P$ as the {\em parent operator}. 
In cases 1, 2, 3, $P$  does not induce a binary tree. The same is true in case 4, if applying $P$ to $s$ $n$ times produces a repeated node; otherwise, we get a simple path of length $n$. Case 5 yields a path of length $n$ from the root $r$.

The proof of K\"onig's lemma suggests that the problem should be in PLC. It turns out that {\sc K\"onig}
is in fact in PPP, and furthermore it is complete.

\begin{theorem}
    {\sc K\"onig} is PPP-complete.
\end{theorem}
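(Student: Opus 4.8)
The plan is to prove the two directions separately: membership in PPP, and PPP-hardness. For membership, I would reduce {\sc K\"onig} to {\sc Collision}. Given the parent operator $P$ on $2^n$ nodes and designated root $r$, the key idea is to build a circuit $D$ on $n$-bit strings whose collisions or zeros encode {\sc K\"onig} certificates. Think of assigning to each node $u$ its ``address'': the sequence of left/right turns on the path from $r$ down to $u$. If $P$ genuinely defines a binary tree of depth $<n$ rooted at $r$, then this address is an injective map from $2^n$ nodes into strings of length $<n$, which is a pigeonhole violation — and that violation is exactly where a long path (or another certificate) must appear. Concretely, on input $u$, circuit $D$ applies $P$ repeatedly up to $n$ times, recording the turn bits; if after $n$ steps it has not reached $r$, or it detects $P(r)\neq r$, or a non-root fixed point, we hardwire $D$ to hit one of the trivial certificate cases (I would fold these into producing $0^n$ or an explicit collision). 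Otherwise the depth $\mathrm{depth}(u)$ is some $d\le n$, and I would set $D(u)$ to encode the pair (depth, address-string) padded to $n$ bits. A collision $D(u)=D(v)$ with $u\neq v$ forces two distinct nodes with the same address — hence the ``identical children'' certificate by walking both paths up to the first point of divergence (which must be a common parent whose two children are both left, or both right). A preimage of $0^n$ is arranged to correspond to a node at depth exactly $n$, giving the ``long path'' certificate via $a_{n-1}$. The main thing to check carefully is the bookkeeping: that the encoding is injective enough that the \emph{only} collisions are genuine identical-children witnesses, and that some input is forced to either map to $0^n$ or collide — this follows because there are $2^n$ nodes but at most $2^n-1$ possible (depth, address) codes of depth $<n$, so if no node has depth $n$ and $P$ is a valid tree, pigeonhole already fails.

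For PPP-hardness, I would reduce from {\sc Collision}. Given a circuit $C:\{0,1\}^n\to\{0,1\}^n$, I want to build a parent operator on roughly $2^{n+1}$ (or $2^{O(n)}$) nodes together with a root, so that any {\sc K\"onig} certificate yields a collision or zero of $C$. The natural construction: the nodes are pairs $(x, i)$ where $x\in\{0,1\}^n$ and $i$ is a ``level'' counter, and the parent of $(x,i)$ is $(\,C(x)\text{-related data},\, i-1)$, arranged so that following parents ``unwinds'' applications of $C$. More usable is to think of it dually: make the tree so that each node at depth $i$ is labeled by an $n$-bit string, the root is labeled (say) $0^n$, and the left/right child structure routes a node labeled $y$ at depth $i$ to have its two children determined so that a path of length $n$ would force the labels to walk through a sequence of strings whose consistency requires $C$ to be injective on a certain set — an impossibility. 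A cleaner route, and the one I would actually pursue: encode the standard fact that PPP-hardness of tree-path problems follows by making the tree be the ``computation tree'' of iterated pigeonhole on $C$, where an ``identical children'' certificate literally exhibits two inputs $x\neq x'$ with the same relevant data, i.e. $C(x)=C(x')$, and a ``long path'' certificate is ruled out by a counting argument unless $C$ has a zero. I would set things up so cases 2, 3, 4 of the certificate are unreachable by construction (the parent operator is syntactically a valid tree rooted at the given $r$, with every node within depth $n$ of $r$ by design), leaving only ``identical children'' or ``long path'', both of which I arrange to hand back a {\sc Collision} solution.

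The main obstacle I expect is the PPP-hardness direction, specifically engineering the parent operator so that (a) it is \emph{syntactically} a legal binary tree — every node has a unique parent chain reaching $r$ in at most $n$ steps, so that the degenerate certificates 2--4 genuinely cannot occur — while simultaneously (b) the tree has depth less than $n$ \emph{unless} $C$ has a collision or zero, so a ``long path'' is informative, and (c) identical-children witnesses decode to $C$-collisions. Reconciling (a) and (b) is the delicate point: a bounded-degree tree on $2^{cn}$ nodes of depth $<n$ would violate pigeonhole outright, so I cannot literally force depth $<n$; instead I would use a universe of size $2^{n-1}$ (or similar) nodes, arranged as a tree that \emph{would} fit in depth $n-2$ if $C$ were injective, so that the existence of a depth-$(n-1)$ path (a {\sc K\"onig} certificate) pins down the pigeonhole collision in $C$. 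Getting this size accounting exactly right, and making the decoding maps polynomial-time, is where the real work lies; the membership direction, by contrast, is a fairly routine ``take the address map as the {\sc Collision} circuit'' argument once the trivial certificate cases are dispatched.
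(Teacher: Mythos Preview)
Your membership direction is essentially the paper's argument: compute each node's root-to-node address (the paper encodes this as the heap index, which is a bijection with your (depth, turn-sequence) pair), send violation cases to $0$, and observe that a collision of addresses is exactly an identical-children certificate. One small remark: if two nodes have the \emph{same} (depth, address) there is no ``first point of divergence'' to walk up to; applying $P$ once to each already gives identical output, so they are directly an identical-children pair.

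The hardness direction, however, has a real gap. You correctly diagnose the tension --- you want a syntactically valid tree whose only possible certificates decode to {\sc Collision} solutions --- but none of your sketched constructions actually resolves it, and the ``$2^{n-1}$ nodes, depth $n-2$ if $C$ injective'' idea is both vague and pointed in the wrong direction (a long path should appear when $C$ is \emph{injective}, not when it has a collision, so it won't decode to anything useful). The paper's trick, which you are missing, is to avoid the long-path certificate entirely. It builds a tree on $2^{n+1}$ nodes where the first $2^n$ nodes sit in a fixed heap-indexed complete binary tree (parent of $s$ is $\lfloor (s-1)/2\rfloor$), and each remaining node $s\in[2^n,2^{n+1}-1]$ is attached as a child of the heap node at position $\lfloor (C(s-2^n)+2^n-2)/2\rfloor$, with side bit equal to the parity of $C(s-2^n)$. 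By construction every node is within depth $n$ of the root, so Far-Away, Long-Path, Invalid-Root, and Non-Unique-Root are all impossible; the \emph{only} certificate is identical children. And $P(s_1)=P(s_2)$ for two of the new nodes holds iff $C(s_1-2^n)=C(s_2-2^n)$, while the one boundary case (colliding with the fixed heap node $2^n-1$) yields a preimage of $0$ under $C$. The key move you did not find is using a rigid, $C$-independent backbone for half the nodes so that all the degenerate certificates are ruled out \emph{structurally}, leaving only the one certificate type that encodes a collision.
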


We show first the hardness:

\begin{lemma}
	{\sc K\"onig}  is PPP-hard
\end{lemma}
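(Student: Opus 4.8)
The plan is to reduce {\sc Collision} (or equivalently {\sc Pigeonhole Circuit}, where a circuit maps $[2^n]\to[2^n-1]$ and we seek a collision) to {\sc K\"onig}. Given a {\sc Collision} instance $C:\{0,1\}^n\to\{0,1\}^n$, I would first massage it as in the PPP-hardness proof for {\sc Long Choice}: define $C'$ so that $C'(a)=0^n$ never occurs (route zeros of $C$ to a fixed reserved value), so that a collision in $C'$ yields a collision or zero of $C$, and $C'$ is guaranteed to have a collision by pigeonhole. The target tree will live on a node set of size roughly $2^{n+1}$ so that paths of length $n+1$ are what we must produce; think of tree nodes as pairs $(i,v)$ where $i\in\{0,\dots,n\}$ is a "level'' and $v\in\{0,1\}^n$ is a value, padded to a common bit length. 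The root $r$ is the level-$0$ node $(0,0^n)$ (or any designated start).

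The heart of the construction is to define the parent operator $P$ on $(i,v)$ for $i\geq 1$ so that the tree, read downward from the root, enumerates in each subtree exactly the preimages under $C'$ that are still "available''. Concretely I would have the level-$1$ children of the root be $(1,v)$ for the two (or more) values $v$ that split the range of $C'$, and inductively arrange that the subtree hanging below a level-$i$ node $(i,u)$ branches according to which half of the shrinking interval of "vacant'' range-values $C'(\text{next input})$ falls into — exactly the interval-halving bookkeeping already used in the {\sc Long Choice} hardness proof. Because $C'$ has no zero and at most $2^n-2$ vacant range values after one step, each level strictly shrinks the vacant set by a factor two, so after $n$ levels the vacant set is empty; hence \emph{every} depth-$n$ path in this tree must, at some step, reuse a range value, i.e. exhibit two inputs $a\neq b$ with $C'(a)=C'(b)$. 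The parent circuit $P$ simply decodes $(i,v)$, recomputes from $v$ (and a polynomial amount of structural data stored inside the node label, or recomputed on the fly) the unique parent label, with $P(r)=r$; this is poly-size since $C'$ is.

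The remaining work is to check that each of the five {\sc K\"onig} certificate types maps back to a {\sc Collision} certificate. A "Long Path'' $a_0=r,\dots,a_n$ gives, by the interval-shrinking invariant (stated and proved just as properties 1–2 in the {\sc Long Choice} hardness proof), two consecutive-or-earlier nodes whose value fields collide under $C'$ — decode to get a collision/zero of $C$. "Identical children'' $P(a)=P(b)$ with $a\neq b$ already means two distinct inputs with the same $C'$-image (since the child's value field is essentially $C'$ of an input), again a collision. "Invalid Root'' and "Non-Unique Root'' cannot occur because I define $P$ so that $P(r)=r$ and $P(s)\neq s$ for all $s\neq r$ by construction; if one insists they be handled, they are contradictions of the explicit definition and can be answered by exhibiting the contradiction in $P$'s description. "Far Away Node'' $s$: applying $P$ $n$ times from $s$ either returns to $r$ — contradicting the hypothesis — or produces a repeated node $P^{(k)}(s)=P^{(k+\ell)}(s)$, and two distinct nodes with the same parent along that walk again give identical children, hence a collision. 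So every certificate is translatable in polynomial time.

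I expect the main obstacle to be designing the parent operator so that the downward structure genuinely realizes the interval-halving enumeration \emph{and} is a well-defined tree (every non-root node has exactly one parent, no cycles, the whole thing connected to $r$, total node count a fixed $2^m$) while keeping $P$ poly-size — in particular making sure the "vacant interval'' associated to a node is recoverable from the node's label alone, which likely forces the label to carry the current interval endpoints $[x_i,y_i]$ alongside the value, exactly the data $[x_i,y_i]\setminus\{C(a_0),\dots\}$ used before. Once the labeling scheme is fixed, verifying the invariants and the certificate translation is routine and mirrors the {\sc Long Choice} argument; the totality/upper-bound direction ({\sc K\"onig} $\in$ PPP), proved separately, then completes the {\sc K\"onig} is PPP-complete theorem.
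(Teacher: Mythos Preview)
Your proposal has a genuine structural gap. The interval-halving bookkeeping from the {\sc Long Choice} hardness proof depends on the \emph{entire prefix} $a_0,\dots,a_i$: the ``vacant'' interval $B_i$ is defined from $C'(a_0),\dots,C'(a_i)$. In {\sc K\"onig} the parent operator $P$ receives only a single node label. If the label is just $(i,v)$ with $v=a_i$, then $P$ cannot recompute the parent, because many different prefixes $(a_0,\dots,a_{i-1})$ can precede the same $a_i$, each yielding a different interval and hence a different parent. If instead you pack the whole prefix (or the interval endpoints together with the list of occupied values) into the label, the label length becomes $\Theta(n^2)$ bits, the node set has size $2^{\Theta(n^2)}$, and almost all labels are ``garbage'' sequences that are not valid subcertificates; you then have to say where $P$ sends every garbage node and argue that no {\sc K\"onig} certificate involving them arises --- you do not address this. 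Worse, even on valid prefixes the downward structure is not a binary tree: a node $(a_0,\dots,a_{i})$ has as children all $(a_0,\dots,a_i,x)$ with $x$ in the current live set, which is exponentially many, not two. Declaring the side bit to be ``which half $C'(x)$ lands in'' then makes \emph{all} such $x$ in the same half into Identical-Children certificates, and two elements landing in the same half of an interval is not a collision of $C'$.

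A smaller issue: your Far-Away-Node translation is wrong. A cycle $P^{(k)}(s)=P^{(k+\ell)}(s)$ gives $\ell$ nodes each with a distinct parent (the next node on the cycle); it does not produce two distinct nodes with the same $(\text{parent},\text{side})$ pair, so you cannot extract Identical Children from it.

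The paper's proof avoids all of this with a much simpler construction: take $2^{n+1}$ nodes, put the first $2^n$ of them into a fixed heap-indexed complete binary tree rooted at $0$ (parent of $s$ is $\lfloor (s-1)/2\rfloor$), and attach each of the remaining $2^n$ nodes $s\in[2^n,2^{n+1}-1]$ as a leaf whose parent and side are determined by $C(s-2^n)$. By counting, the bottom layer of the heap has strictly fewer child slots than there are nodes trying to occupy them, so the \emph{only} possible {\sc K\"onig} certificate is Identical Children, and that certificate directly yields either a zero of $C$ or a collision. No path-history, no interval bookkeeping, no garbage nodes.
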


\begin{figure}[t!]
	\centering
	\includegraphics[width=0.5\textwidth]{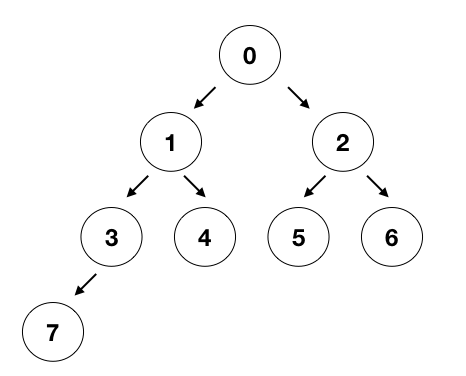}
	\caption{\textbf{Indexing of a Binary Tree}. Here we provide an illustration of how we index a binary tree in our proof that Konig is PPP-hard. An arrow from node ``a" to node ``b" indicates that ``a" is the parent of ``b". At the top of the image, the root has encoding $0$. Its left child is $1$ and its right child is $2$. Given a node, $x$, the parent function $P$ calculates $P(x) = \lfloor \frac{x-1}{2} \rfloor$.}
	\label{fig:IndexingBinaryTree}
\end{figure}

\begin{proof}
    We  provide a reduction from {\sc Collision}. Suppose we are given a circuit $C$ that defines a {\sc Collision} instance: 
    
    $$C: \{0,1\}^n \mapsto \{0,1\}^n $$
    
    We define a {\sc K\"onig} problem instance on $2^{n+1}$ nodes, by a parent mapping : 
    
    $$P: \{0, 1\}^{n+1} \mapsto \{0,1\}^{n+1} \Cross \{0,1\}$$
    
    Note that $P$ implicitly defines a graph on $2^{n+1}$ nodes. The nodes are supposed to be arranged in a binary tree with root $0$.
    We start by defining the positions of nodes whose binary encodings are in the range $[0, 2^n - 1]$, and we do so recursively. First, $0$ is the root. Next, consider any node $s$ in this range. If the binary encoding of $s$ is odd, we define $P(s) = (\lfloor\frac{s-1}{2} \rfloor, 0)$; that is, $s$ is the left child of a node with binary encoding $\lfloor{\frac{s-1}{2}}\rfloor$. Similarly, if the binary encoding of $s$ is even, we define $P(s) = (\lfloor{\frac{s-1}{2}}\rfloor, 1)$. See Fig.~\ref{fig:IndexingBinaryTree} for an illustration of this structure in the case that $n = 3$. (This arrangement is similar to the indexing of a heap.) Note that so far, our definition of $P$ has absolutely no dependence on the circuit $C$. 
    
    To finish the definition of $P$, it remains to consider the nodes whose binary encodings are in the range $A = [2^n, 2^{n+1} - 1]$. Consider a given node $s$ whose encoding lies in the range given by $A$. Then $s - 2^n$ lies in the domain of $C$. Furthermore, for each $s\in{A}$, $s-2^n$ corresponds to a unique element in the domain of $C$. For each $s\in{A}$, we let 
    $r = C(s - 2^n) + 2^{n} - 1$ and we define 
    $$P(s) = (\lfloor \frac{r-1}{2} \rfloor, \text{parity}(C(s-2^n)))$$
    
    where the parity function returns $0$ if the input is even and $1$ if it is odd.
    
    Because the circuit $C$ has range $[0, 2^n-1]$, $r$ must lie in the interval $[2^{n}-1, 2^{n+1}-1]$. Then, based on the definition of $P(s)$, the parent of each $s\in{A}$ must lie in the interval $B = [2^{n-1}-1, 2^{n}-1]$. There are $2^{n-1}$ nodes in the interval $B$, and each of these nodes can have at most $2$ children. Thus, a total of $2^n$ nodes can have parent nodes in the interval $B$.
    
    On the other hand, every node of $A$ must have a parent in the interval $B$, and we know there are $2^n$ nodes in the interval $A$. In addition to the nodes of $A$, we know that the node $2^{n}-1$, which does not belong to $A$, also has a parent in $B$: $P(2^{n}-1) = 2^{n-1}-1$. This implies that a total of $2^n+1$ elements must have a parent belonging to the interval $B$. 
    
    Since $B$ can only have $2^n$ children, by the pigeonhole principle, there must exist a node in $B$ with two left children or two right children. By the definition of the {\sc K\"onig} problem, these two left or right children form a valid certificate to the {\sc K\"onig} problem. It remains to show that these two nodes also provide a certificate for the {\sc Collision} problem. 
    
    There are two cases to consider. In the first case, suppose that parent node $2^{n-1}-1$ has two left children. We know that one of these children is the node $2^{n}-1$, and the other child must be a node $s\in{A}$. In this case, if $P(s) = (\lfloor \frac{r-1}{2} \rfloor = 2^{n-1}-1, 0)$, we can conclude that $r = 2^{n}-1$. However, this would in turn imply that $C(s - 2^n) = 0$, which means $s - 2^n$ is a zero element to our original {\sc Collision} problem! 
    
    In the second case, suppose that a parent node $i > 2^{n-1}-1$ has two left (or right) children. This would imply that there are two distinct nodes $s_1, s_2\in{A}$ with $P(s_1) = P(s_2)$. By the definition of $P$, this in turn implies that $C(s_1 - 2^n) = C(s_2 - 2^n)$. Since $s_1 \neq{s_2}$, we can conclude that $s_1 - 2^n$ and $s_2 - 2^n$ provide us with a collision in our {\sc Collision} certificate. 
\end{proof}

\begin{lemma}
	{\sc K\"onig} is in PPP
\end{lemma}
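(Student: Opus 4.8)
The plan is to reduce {\sc K\"onig} to {\sc Collision}, exploiting the fact that in a genuine binary tree on $2^n$ nodes the depth-to-the-root function cannot keep every node at depth $\le n-1$ (there would then be at most $2^n-1$ nodes). Given the parent operator $P$ and the purported root $r$, I would first test whether $P(r)=r$; if it fails, return certificate~2 and stop. Otherwise I build a circuit $C\colon\{0,1\}^n\to\{0,1\}^n$ that, on input $u$, iterates the node-part of $P$ at most $n$ times, producing $v_0=u,v_1,\dots$ together with the direction bits $b_0,b_1,\dots$ encountered (here $v_{k+1}$ and $b_k$ are the node- and bit-parts of $P(v_k)$), stopping at the first index $d$ with $P(v_d)=v_d$. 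Call $u$ \emph{good} when this happens with $v_d=r$ and $d\le n-1$; for a good $u$ set $C(u)$ to be the ``heap index'' of $u$, the integer whose binary representation is $1\,b_{d-1}b_{d-2}\cdots b_0$, which lies in $[1,2^n-1]$ and hence is a nonzero $n$-bit string; in every other case set $C(u)=0^n$. This $C$ has size $\mathrm{poly}(n)$ since $P$ is unrolled only $n$ times.

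Next I would show how to post-process any answer of the {\sc Collision} oracle on $C$ into a {\sc K\"onig} certificate, in $\mathrm{poly}(n)$ time. If the oracle returns some $u$ with $C(u)=0^n$, rerun the walk from $u$: the only reasons $u$ failed to be good are (i) the walk meets a node $s\ne r$ with $P(s)=s$ — return certificate~3; or (ii) the walk never meets $r$ within $n$ steps — return certificate~4; or (iii) the walk first reaches $r$ at step exactly $n$, in which case $v_0,\dots,v_n$ are pairwise distinct (a repeat would force either an earlier occurrence of $r$ or a cycle that never reaches $r$, both impossible) and so $v_n,v_{n-1},\dots,v_0$ is a path of length $n$ from $r$ — return certificate~5. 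If instead the oracle returns a colliding pair $u\ne u'$ with $C(u)=C(u')\ne 0^n$, both are good; equality of heap indices forces equal walk lengths $d$ and equal bit strings $b_0,\dots,b_{d-1}$, so comparing the two walks downward from the shared endpoint $v_d=v'_d=r$, the largest index $i$ with $v_i\ne v'_i$ satisfies $v_{i+1}=v'_{i+1}$ and $P(v_i)=(v_{i+1},b_i)=P(v'_i)$, i.e. $\{v_i,v'_i\}$ is certificate~1; and if no such $i$ exists then $u=u'$, a contradiction.

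Finally I would observe that the oracle must return one of these outcomes: either $C$ attains $0^n$ on some input, or it maps the $2^n$ nodes into $\{1,\dots,2^n-1\}$ and hence has a collision; so the reduction always produces a certificate, establishing {\sc K\"onig}$\in$PPP.

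The main obstacle is not the pigeonhole itself but making the ``$C(u)=0^n$'' case watertight while keeping $C$ polynomial: capping the walk at $n$ steps is exactly what forces the Far Away Node and depth-exactly-$n$ certificates to be available, and one must check carefully that in the depth-$n$ case the recovered walk is genuinely a simple path (ruling out cycles disjoint from $r$) and that being ``good'' is characterized robustly enough that $C(u)=0^n$ always pins down one of certificates~3–5.
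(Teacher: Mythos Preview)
Your proposal is correct and takes essentially the same approach as the paper: walk from each node toward the root via $P$, assign a heap-style position index when the walk reaches $r$ within depth $n-1$ and $0^n$ otherwise, then post-process the {\sc Collision} answer into one of the five {\sc K\"onig} certificates. Your collision-case analysis (locating the largest $i$ with $v_i\neq v'_i$ along the two equal-index root-paths) is in fact slightly more careful than the paper's, which asserts directly that equal indices force the two nodes themselves to be identical children without tracing back along the paths.
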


\begin{proof}
    We are given an instance of the {\sc K\"onig} problem, which is defined by a root node, $r$, and a circuit $P$:
    
    $$P: \{0, 1\}^{n} \mapsto \{0,1\}^{n} \Cross \{0,1\}$$
    
    As described above, $P$ implicitly describes a graph on $2^{n}$ nodes. We wish to reduce it to an instance of Pigeonhole Circuit in polynomial time. To do this, we define a circuit, $C$:
    
    $$C: \{0, 1\}^{n} \mapsto \{0,1\}^{n} $$
    
    where the domain of $C$ will be the nodes of the graph defined by $P$.

    In any binary tree, we can assign every node a unique ``index" based on its position relative to the root. This indexing scheme is defined in an inductive fashion. First, the root is given index $0$. Next, suppose a given node $a_i$ has index $i$. Then, we say that its left child has index $2*i + 1$ and its right child has index $2*i + 2$; see Fig.~\ref{fig:IndexingBinaryTree} for an illustration of this indexing scheme. 
    
    Consider any node of the graph, $g$. Suppose that $g$ is neither a Far away node nor a Long path certificate nor a Non-unique root. (Note that all of these conditions are easy to verify). Under these assumptions, we show how to efficiently find the index of $g$. To do this, we repeatedly apply the parent operator, $P$, to the node $g$ until we reach the root node. Based on the assumptions we have made, this is always possible. Furthermore, it will require at most $n-1$ applications of the parent operator. Every time we apply the parent operator to a node, we receive two pieces of information: the node's parent and whether the node is a left or right child of its parent. Thus, once we reach the root node in this process, we have a sequence of nodes $a_0, a_1, a_2, \dots, a_k$ where $a_k = r$ and $a_0 = g$. Furthermore, for each $a_i$ and $a_{i+1}$, we know whether $a_i$ is the left or right child of $a_{i+1}$. We therefore have a path from $r$ to $g$. Using this path, we can use the indexing scheme mentioned above to find the indices of the nodes $r = a_k, a_{k-1}, \dots, a_0 = g$ in that order to finally arrive at the index of node $g$. 
    
    We are now in a position to fully define the circuit $C$. On input $i$, $C(i)$ outputs: 
    
    \begin{enumerate}
        \item $0$ if $r$ is an invalid root
        \item $0$ if $i\neq{r}$ and $i$ is a root (in this case, there is a non-unique root).
        \item $0$ if $i$ is a Far Away Node
        \item $0$ if $i$ provides a certificate for a Long Path.
        \item If \textbf{none} of the above conditions are met, then let $x$ be the index of node $i$ in the binary tree. $C(i)$ returns $x+1$.
    \end{enumerate}
    
    It now remains to show that a certificate to the {\sc K\"onig} problem can be recovered from a certificate to the {\sc Collision} problem we have just defined. There are two cases to consider. 
    
    In the first case, suppose our certificate to the {\sc Collision} problem is a zero element. In this case, there are four possibilities. Either $r$ is an invalid root, or $i\neq{r}$ is a root, or $i$ is a Far Away Node or $i$ is a certificate for a Long Path. We can polynomially verify all of these conditions, and they all represent valid certificates to the {\sc K\"onig} problem. 
    
    In the second case, suppose our certificate to the {\sc Collision} problem is a collision of two elements in which neither element is a zero element: $i\neq{j}$ with $C(i) = C(j)\neq{0}$. Then, we know that $i$ and $j$ are not Long Path certificates and are not Far Away nodes. Furthermore, we know that their indices must be equal to each other, which in turn implies that they are left (or right) children of the same parent node. Thus, $i$ and $j$ are identical children and provide us with a valid certificate.
\end{proof}

Theorem 8 follows from the above two lemmas.

\subsection{Erd\H{o}s-Ko-Rado}

The Erd\H{o}s-Ko-Rado Lemma is one of the foundational results in extremal set theory.

\begin{theorem}[Erd\H{o}s-Ko-Rado Lemma]
If $F$ is any $k$-set system over a universe $X$ of size $n > 2k$, and every pair of sets in $F$ has non-empty intersection, then
	$$|F| \leq \binom{n-1}{k-1}$$
\end{theorem}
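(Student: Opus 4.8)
The plan is to prove the lemma via Katona's cyclic-permutation averaging argument, which reduces the whole statement to a single combinatorial fact about arcs on a circle.

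First I would set up the circle lemma. Place the $n$ elements of $X$ at the vertices of a regular $n$-gon and call a $k$-subset an \emph{arc} of this arrangement if its $k$ elements occupy $k$ consecutive vertices. Fix a cyclic arrangement $\sigma$; the members of $F$ that happen to be arcs of $\sigma$ are pairwise intersecting, so it suffices to bound the size of an arbitrary intersecting family of $k$-arcs. I would show such a family has at most $k$ members: pick one arc $A$ in it, say on positions $1,\dots,k$ (if the family is empty we are done). Every other $k$-arc meeting $A$ has one of two forms indexed by $i\in\{2,\dots,k\}$ --- the arc $R_i$ on positions $i,\dots,i+k-1$, or the arc $L_i$ on positions $i-k,\dots,i-1$ (indices mod $n$) --- so there are exactly $2k-1$ arcs meeting $A$ in total. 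For each $i$, the arcs $R_i$ and $L_i$ together fill the $2k$ consecutive positions $i-k,\dots,i+k-1$, which are distinct because $n>2k$; hence $R_i\cap L_i=\emptyset$ and the intersecting family contains at most one of the pair $\{R_i,L_i\}$. Adding $A$ itself, the family has at most $1+(k-1)=k$ arcs.

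Next I would carry out the double count. Consider incidences $(\sigma,A)$ where $\sigma$ ranges over the $(n-1)!$ cyclic arrangements of $X$ and $A\in F$ is an arc of $\sigma$. By the circle lemma each $\sigma$ contributes at most $k$ incidences, so there are at most $k\cdot(n-1)!$ of them. On the other hand, a fixed $k$-set $A$ is an arc in exactly $k!\,(n-k)!$ cyclic arrangements: glue $A$ into a single block, cyclically arrange the resulting $n-k+1$ objects in $(n-k)!$ ways, then order $A$ internally in $k!$ ways. Hence the incidence count also equals $|F|\cdot k!\,(n-k)!$. Comparing the two expressions gives $|F|\cdot k!\,(n-k)!\le k\cdot(n-1)!$, i.e. $|F|\le \frac{k\,(n-1)!}{k!\,(n-k)!}=\frac{(n-1)!}{(k-1)!\,(n-k)!}=\binom{n-1}{k-1}$.

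The only delicate point is the circle lemma, and within it two things must be checked carefully: that every $k$-arc meeting $A$ really is one of $A,R_2,\dots,R_k,L_2,\dots,L_k$ (a small case analysis on whether an arc enters the block of $A$ from the left or from the right), and that the hypothesis on $n$ is exactly what forces $R_i\cap L_i=\emptyset$ --- for $n<2k$ the $2k$ consecutive positions wrap around and collide, the lemma breaks, and the bound genuinely fails. The incidence counts are then routine. As an alternative one could run the Frankl shifting/compression proof --- repeatedly apply shift operators to reduce to a left-compressed family and bound that directly --- but Katona's argument is considerably shorter.
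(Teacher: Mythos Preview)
The paper does not actually prove this statement: it is quoted as a classical result (the Erd\H{o}s--Ko--Rado Lemma) and then used as a black box to motivate the computational problem that follows. So there is no ``paper's own proof'' to compare against.

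Your proposal is correct and is precisely Katona's cyclic-permutation proof, one of the standard textbook arguments. The two checkpoints you flag are exactly the right ones: that the $2k-1$ arcs $A,R_2,\dots,R_k,L_2,\dots,L_k$ exhaust all $k$-arcs meeting $A$, and that $n>2k$ forces $R_i\cap L_i=\emptyset$ so at most one of each pair survives. The double count is routine and your arithmetic is right. One tiny remark: your parenthetical that ``for $n<2k$ \dots\ the bound genuinely fails'' is a comment about sharpness rather than about the proof, and in fact at $n=2k$ the circle lemma and the bound both still hold (the $2k$ positions fill the whole circle exactly, so $R_i$ and $L_i$ are complementary and hence disjoint); the strict inequality $n>2k$ in the hypothesis is there because the extremal families change character at $n=2k$, not because the inequality breaks. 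This does not affect your argument.
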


Thus, if $|F| > \binom{n-1}{k-1} $ then $F$ must contain two disjoint sets. How hard is it to \textit{find} these two disjoint sets, if $F$ is a succinctly given exponentially large system? 

Take the case in which $k = 2$ and the size of the universe is $2^n$. In this case, the Erdos-Ko-Rado lemma tell us that the largest possible intersecting set system has size $\binom{2^n - 1}{1} = 2^n-1$. Therefore, given a $2$-set system $F$ of size $2^n > 2^n - 1$, then $F$ must contain two disjoint sets. 

\begin{defi}[Problem Statement: {\sc Erd\H{o}s-Ko-Rado}]
   We are given a poly($n$)-sized circuit
   $$F: \{0,1\}^n \mapsto (\{0,1\}^n)^2 $$
   which is supposed to represent a 2-set system of size $2^n$ over the universe $X=\{0,1\}^n$.
   The problem is to find either a violation ($F$ is not a valid encoding) or two disjoint sets of the set system. Specifically, return one of the following certificates: 
	\begin{enumerate}
    \item (Error:)	An index $i$ such that $F(i)=(a,a)$ for some $a \in \{0,1\}^n$ (i.e., the set $F(i)$ has two identical elements), or two distinct indices $i ,j$ such that the sets (represented by) $F(i), F(j)$ are equal, or
    \item (Disjoint sets:) Two indices $i, j$ such that the sets (represented by) $F(i), F(j)$ are disjoint.
	\end{enumerate} 
\end{defi}

By the Erd\H{o}s-Ko-Rado lemma, one of these conditions must occur, placing the problem in TFNP. We now show that this problem is equivalent to PPP. 

\begin{theorem}
	{\sc Erd\H{o}s-Ko-Rado} is PPP-complete. 
\end{theorem}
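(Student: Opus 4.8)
The plan is to prove membership and hardness separately, each via a direct reduction to/from {\sc Collision}. For \emph{membership in PPP}, I would reduce {\sc Erd\H{o}s-Ko-Rado} to {\sc Collision}. Given the circuit $F:\{0,1\}^n \to (\{0,1\}^n)^2$ encoding a claimed intersecting $2$-set system of size $2^n$, I would build a circuit $C:\{0,1\}^n \to \{0,1\}^n$ whose collisions/zeros encode EKR certificates. The key structural observation is that a $2$-set $\{a,b\}$ with $a\neq b$ that meets every other $2$-set is essentially forced to contain a fixed common element: if the system is genuinely intersecting and has $2^n-1$ members (the EKR bound for $k=2$, universe size $2^n$), then (barring disjointness) all sets share a single vertex $v$, and the map $F(i)\mapsto$ ``the other endpoint'' is then a well-defined function into a set of size $2^n-1$. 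So I would have $C$ attempt to extract this invariant: on input $i$, first check if $F(i)$ is degenerate ($F(i)=(a,a)$) — if so output $0^n$; otherwise output some canonical encoding of $F(i)$ (e.g. the pair sorted, packed into $n$ bits is not possible since a $2$-set needs roughly $2n$ bits, so instead I output a value chosen so that two equal sets collide and so that the pigeonhole forces either a repeated set, a degenerate set, or a disjoint pair). Concretely: fix an arbitrary vertex, or rather, use the first set $F(0)=\{a_0,b_0\}$ to define a target; map each $i$ to the element of $F(i)$ other than $a_0$ if $a_0\in F(i)$, and to $a_0$ itself (a ``special'' value) if $a_0\notin F(i)$. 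Then $C$ maps $2^n$ inputs into the $2^n$-element universe but the value ``$a_0$'' can only be hit by at most... — here one has to be careful and it may be cleaner to route through PWPP-style counting. A robust alternative: show directly that from $2^n$ sets one can, by a pigeonhole on the $\binom{2^n}{1}$ possible ``pivot'' elements, locate two sets sharing no element, and phrase the whole search as a {\sc Collision} instance on a circuit counting incidences.

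For \emph{PPP-hardness}, I would reduce from {\sc Collision}: given $C:\{0,1\}^n\to\{0,1\}^n$, build a $2$-set system on a universe of size $\approx 2^{n+O(1)}$ and of size $2^m$ for the appropriate $m$, such that any two disjoint sets (or any encoding error) yields a collision or zero of $C$. The natural construction mirrors the {\sc K\"onig} hardness reduction: encode the index $i$ together with its image $C(i)$ into the two endpoints of the set $F(i)$, e.g. $F(i)=\{\,(0,i),\,(1,C(i))\,\}$ living in a universe split into a ``left copy'' and a ``right copy'' of $\{0,1\}^n$. Then two sets $F(i),F(j)$ are disjoint iff $i\neq j$ and $C(i)\neq C(j)$ — i.e., disjointness is \emph{equivalent} to a non-collision, which is the wrong direction, so instead I would make the sets intersect exactly when there is a collision: use $F(i)=\{\,i,\; C(i)+2^n\,\}$ and note all these sets pairwise intersect only if $C$ were injective with all images already present as indices — this needs tuning. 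The clean fix is to pad: take the universe $X=\{0,1\}^n \cup \{0,1\}^n$ (two disjoint copies, total $2^{n+1}$) and define $F(i)=\{\,i \text{ in copy 1},\, C(i) \text{ in copy 2}\,\}$; these $2^n$ sets are automatically pairwise intersecting \emph{only across} the copies, which never happens, so they are pairwise disjoint — again wrong. Therefore the right move is to force a \emph{shared pivot}: add a single universal element $\star$ and set $F(i)=\{\star, \phi(i)\}$ where $\phi$ packs $(i,C(i))$ injectively into the remaining universe; then all sets share $\star$ and are never disjoint — so instead I remove $\star$ from exactly the sets I want to be ``problematic.'' After calibrating sizes so the EKR bound is violated, the returned certificate (equal sets, degenerate set, or disjoint pair) is read back through $\phi$ and $C$ to give a {\sc Collision} answer.

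The main obstacle I anticipate is the \textbf{arithmetic/encoding bookkeeping} on the hardness side: EKR for $k=2$ has a razor-thin gap ($2^n$ versus $2^n-1$), so the reduction must produce a system whose size \emph{exactly} exceeds the threshold for the chosen universe size, and the universe-size blowup from encoding pairs $(i,C(i))$ must be controlled so that the resulting instance is still of the advertised form ($2^m$ sets over $\{0,1\}^m$). Getting a clean injective encoding $\phi$ of $\{0,1\}^n\times\{0,1\}^n$ into $\{0,1\}^m$ with the pivot element accounted for, while keeping ``two sets disjoint $\iff$ collision in $C$'' as an iff (both directions), is where the care is needed; I expect to follow the same style of explicit interval/index arithmetic used in the {\sc K\"onig}-hardness proof above. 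Membership is comparatively routine once one observes that the intersecting-family-of-near-maximum-size structure theorem (all sets through a common point) lets you extract a function into an $(N-1)$-element set and invoke {\sc Collision} directly.
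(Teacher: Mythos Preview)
Your plan has real gaps in both directions, and in each case you are circling the right idea but missing the key simplification.

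\textbf{Hardness.} All of your attempted constructions aim to encode the pair $(i,C(i))$ injectively into the second element of the $2$-set, and you then worry about the universe blowing up. But injectivity is exactly what you \emph{don't} want: if $F$ is injective on indices, the resulting family has no repeated sets, and if all sets share a pivot $\star$ it has no disjoint pairs or degenerate sets either --- so there is no certificate at all, contradicting totality (and indeed the instance is not of the required shape $2^m$ sets over $\{0,1\}^m$). The paper's reduction is far simpler than anything you tried: set $F(x) = (0^n, C(x))$ on the \emph{same} universe $\{0,1\}^n$. All sets contain $0^n$, so disjointness never occurs; $F(x)$ is degenerate iff $C(x)=0^n$ (a zero of $C$); and $F(x)=F(y)$ as sets iff $C(x)=C(y)$ (a collision). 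No size calibration, no $\phi$, no padding. Your ``main obstacle'' of encoding bookkeeping is self-inflicted.

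\textbf{Membership.} Your idea of picking a pivot from $F(0)=\{a_0,b_0\}$ and mapping $i$ to ``the element of $F(i)$ other than $a_0$, or to a special value if $a_0\notin F(i)$'' does not yield a valid reduction. Suppose the true common element of an intersecting family is $b_0$, not $a_0$; then many sets miss $a_0$, all map to the special value, and a collision there gives you two sets $F(i),F(j)$ both of the form $\{b_0,\cdot\}$ --- possibly distinct and intersecting, so you have no EKR certificate. The paper's fix is to look at \emph{two} sets $F(0)=\{a,b\}$ and $F(1)=\{b,c\}$ and take their intersection $b$ as the pivot. Then a set $F(x)$ missing $b$ (the ``zero'' case) must, unless it is already disjoint from $F(0)$ or $F(1)$, equal $\{a,c\}$; now $F(0),F(1),F(x)$ form a triangle on $\{a,b,c\}$, and \emph{any} fourth set is either identical to or disjoint from one of these three. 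That extra set $F(1)$ is the missing ingredient that makes the zero case resolvable in polynomial time.
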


\begin{proof}
	We first prove that {\sc Erd\H{o}s-Ko-Rado} is PPP-hard. Accordingly, suppose we are given some instance of {\sc Collision}, defined by a circuit 
	$C: \{0,1\}^n \mapsto \{0,1\}^n $,
	and the challenge is to find an input that maps to $0$ or find two distinct inputs which map to the same value. 
	
	We construct a circuit, $F$, which implicitly defines a $2$-set system of size $2^n$. Our circuit $F$ takes, as input, an $n$-bit string, and outputs a $(\{0,1\}^n)^2$-bit string: $F(x) = (0^n, C(x))$.

	It remains to show that any solution to this instance of the {\sc Erd\H{o}s-Ko-Rado} problem allows us to recover a solution to the original {\sc Collision} problem. There are $3$ possible types of certificates that we can find in the {\sc Erd\H{o}s-Ko-Rado} problem instance. First, consider a certificate providing two disjoint sets. This is impossible from the above definition of $F$, since every set contains the element $0^n$. Second, consider a certificate $x$ providing an invalid set $F(x)$. This means that $C(x)=0^n$, i.e. we find a zero element in the {\sc Collision}  problem. Third, consider a certificate which defines a repeat set. That is, we have two indices $a$ and $b$ with $F(a)$ and $F(b)$ defining the same set. As we see in the above definition of $F$, this implies that $C(a) = C(b)$, which provides us with a collision for our original {\sc Collision} problem. Thus, any certificate for the {\sc Erd\H{o}s-Ko-Rado} problem provides us with a certificate for the {\sc Collision} problem above. 
	
	We now reduce {\sc Erd\H{o}s-Ko-Rado}  to {\sc Collision}. Consider an {\sc Erd\H{o}s-Ko-Rado} problem instance, defined by a circuit $F$ which implicitly provides us with a $2$-set system of size $2^n$. We construct a circuit $C: \{0,1\}^n \mapsto \{0,1\}^n$ as follows. 
	We first examine two arbitrary sets from the set system, $F$. Without loss of generality, we examine $F(0)$ and $F(1)$. If these two sets are disjoint, identical, or invalid, we are done. Otherwise, they have an intersection of exactly $1$ element. Let $F(0) = (a,b)$ and $F(1) = (b,c)$, such that their intersection is $\{b\}$. For any given index $i$, define $C(i)$, as follows: 
	
	\begin{enumerate}
		\item If $F(i)$ outputs an invalid set, set $C(i) = 0$
		\item If $F(i)$ contains both $b$ and $d$ (where $d\neq{b}$), then we let $C(i) = d$.
		\item If $F(i)$ does not contain $b$, then set $C(i) = 0$. 
	\end{enumerate}
	
	We now demonstrate that a solution to the {\sc Collision} problem for circuit $C$ allows us to recover a certificate to the original {\sc Erd\H{o}s-Ko-Rado} problem. 
	
	There are two cases to consider here. We start by considering the case in which we recover a zero element. That is, we find an element $x$ with $C(x) = 0$. This can only happen in $2$ cases, based on the above definition of $C$.  First, $F(x)$ might generate an invalid set. In this case, $x$ is a valid certificate. Second, $F(x)$ might not contain $b$. In this case, suppose $F(x) = (p,q)$. We note that $x$ is distinct from $1$ and $0$, because $0$ and $1$ are not zero elements, based on the above definition. If $F(x)$ is disjoint from either $F(1)$ or $F(0)$, then we are done. If, instead, $F(x)$ does not contain $b$ and has nonempty intersection with $F(1)$ and $F(0)$, then $F(x)$ must contain $a$ and $c$. Then, consider \textit{any} index $y$ that is distinct from $0,1,x$. If $F(y)$ is invalid, we are done. If $F(y)$ is identical to one of $F(0), F(1), F(x)$, we have found a repeated set certificate, and we are done. Otherwise, $F(y)$ must be fully disjoint from at least one of $F(0), F(1), F(x)$, and we can recover two disjoint sets. That is, the only way a set $F(y)$ of size 2 intersects each of $\{a,b\}, \{a,c\}, \{b,c\}$ is if $F(y)$ is identical to one of $\{a,b\}, \{a,c\}, \{b,c\}$. 
	
	Now, suppose that we recover a collision from $C$. That is, we find two distinct $n$-bit strings $x$ and $y$ with $C(x) = C(y)$. First, suppose $C(x) = C(y) \neq{0}$. Based on the above definition of $C$, this can only happen if $F(x)$ and $F(y)$ both contain the element $b$ as well as the element $C(x)$. In this case, $F(x)$ and $F(y)$ both describe the set $\{b, C(x)\}$, which means that we have recovered a "repeated set" certificate. In the second case, suppose $C(x) = C(y) = 0$. That is, suppose we have found two zero certificates. In this case, the argument from above tells us that we can recover the desired certificate using any one of these zero certificates.  In this case, $F(x)$ and $F(y)$ are each either invalid sets or they do not contain $b$. In the case that they are invalid sets, we are done. If these indices both do not contain $b$, then $F(x)$ and $F(y)$ must each contain both $a$ and $c$ in order to have nonempty intersection with $F(1)$ and $F(2)$.

\end{proof}

\section{Mantel, Tur\'an, and Bad Colorings}


In this section, we introduce a new flavor of problems from extremal combinatorics which generalize PPP. This class of problems is related to Mantel's and Tur\'an's theorem and graph colorings.
Mantel's theorem states that a triangle-free graph with $N$ nodes has at maximum $\lfloor N^2/4 \rfloor$ edges. The maximum is achieved by a complete bipartite graph with equal or almost equal parts (depending on whether $N$ is even or odd).
Tur\'an's theorem answers the generalized question of what is the maximum number of edges in a graph with $N$ nodes that does not contain a $(k+1)$-clique: the maximum is achieved by a complete $k$-partite graph that has equal or almost equal parts  
(see \citep{jukna} for a detailed exposition). 

The same quantities answer the easier question of, what is the maximum number of edges of a $k$-colorable graph on $N$ nodes.
A $k$-colorable graph whose color classes have sizes $x_1, \ldots, x_k$ can have at most $\sum_{i \neq j} x_i x_j$ edges. Since the $x_i$'s are integers that sum to $N$, it can be shown that the maximum is achieved when they are all equal or almost equal.

These theorems induce corresponding total computational problems: Given (succinctly) an exponential graph with more edges than the above bounds of Mantel or Tur\'an, find a triangle or a $(k+1)$-clique respectively.
If we are given in addition a $k$-coloring of the nodes, find an illegally colored edge.
We call these problems respectively {\sc Mantel},
{\sc $k$-Tur\'an} and {\sc Bad $k$-Coloring} 
({\sc Mantel} is just {\sc $2$-Tur\'an}).
We define below formally the problems as TFNP problems.

\begin{defi}[{\sc $k$-Tur\'an}]
   We are given a poly($n$)-sized circuit $E: [ \binom{k}{2} (2^n)^2+1] \mapsto [(k 2^n)^2]$, which is supposed to represent a graph with $k 2^n$ nodes
   and $\binom{k}{2} (2^n)^2+1$ edges (nodes and indices of edges are encoded by bit-strings of appropriate length as usual); $E$ maps the index of an edge to the two nodes of the edge. The problem is to find either a violation ($E$ is not a valid encoding of the edges of a graph) or a $(k+1)$-clique. Specifically, return one of the following certificates:
   	\begin{enumerate}
 		\item (Error): An index $i$ such that $E(i)$ consists of two identical nodes, or two distinct indices $i,j$ such that $E(i), E(j)$ contain the same two nodes
 		(not necessarily in the same order), or
 		\item ($(k+1)$-clique): $\binom{k+1}{2}$ indices which are mapped by $E$ to the edges of a clique on $k+1$ nodes.
	\end{enumerate}
\end{defi}

\begin{defi}[{\sc Bad $k$-Coloring}]
    We are given a poly($n$)-sized circuit $E: [ \binom{k}{2} (2^n)^2+1] \mapsto [(k 2^n)^2]$, which is supposed to represent a graph with $k 2^n$ nodes
   and $\binom{k}{2} (2^n)^2+1$ edges and a poly($n$)-sized circuit $C: [k 2^n] \mapsto [k] $ which colors the nodes with $k$ colors. The problem is to find either a violation ($E$ is not a valid encoding of the edges of a graph) or an edge whose nodes have the same color. Specifically, return one of the following certificates:
   	\begin{enumerate}
 		\item (Error): An index $i$ such that $E(i)$ consists of two identical nodes, or two distinct indices $i,j$ such that $E(i), E(j)$ contain the same two nodes
 		(not necessarily in the same order), or
 		\item (Bad edge): An index $i$ such that $E(i) = (a,b)$ and $C(a)=C(b)$.
	\end{enumerate}
\end{defi}

We show the following relations between PPP and these problems:

\begin{theorem}
1. PPP reduces to {\sc Bad $2$-Coloring}.\\
2. For all $k \geq 2$, {\sc Bad $k$-Coloring} reduces to {\sc $k$-Tur\'an}. In particular, {\sc Bad $2$-Coloring} reduces to {\sc Mantel}.\\
3. For all $k \geq 2$, {\sc Bad $k$-Coloring} reduces to {\sc Bad $(k+1)$-Coloring}.\\
4. For all $k \geq 2$, {\sc $k$ Tur\'an} reduces to {\sc $(k+1)$-Tur\'an}.
\end{theorem}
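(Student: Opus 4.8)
The plan is to treat item 1 as the only substantive reduction and items 2--4 as simple ``join gadgets'' that attach a fresh part (and, for item 3, a fresh color) to the given instance, matching node and edge counts exactly. For item 1 I would reduce from \textsc{Collision}, using the fact that a \textsc{Bad 2-Coloring} (equivalently \textsc{Mantel}) instance is forced to carry one surplus edge beyond a complete bipartite graph, and routing that surplus through the circuit so that it can only be absorbed by a collision.

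\textbf{Item 1 (PPP $\le$ \textsc{Bad 2-Coloring}).}
Given a \textsc{Collision} instance $C_0:\{0,1\}^n\to\{0,1\}^n$, set $N=2^n$ and first replace $C_0$ by $C:[N]\to[N]$ with $C(a)=C_0(a)$ if $C_0(a)\neq 0$ and $C(a)=N-1$ otherwise, so $C$ never outputs $0$; any collision of $C$ yields a zero or a collision of $C_0$. Build a graph on $2N$ nodes with a left class $\ell_0,\dots,\ell_{N-1}$ colored $1$ and a right class $r_0,\dots,r_{N-1}$ colored $2$ (a proper $2$-coloring by construction), and $N^2+1$ edges indexed by $\{0,\dots,N^2\}$: for $i<N^2$, writing $i=aN+s$ with $a=\lfloor i/N\rfloor$ and $s=i-aN$, let edge $i$ join $\ell_a$ to $r_{C(s)}$; let edge $N^2$ join $\ell_0$ to $r_0$. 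Every edge is bichromatic and there is no self-loop, so the ``bad edge'' and ``self-loop'' certificates are unavailable and, by totality of \textsc{Bad 2-Coloring}, a solver must return two distinct indices $i\neq j$ with the same endpoint pair. The surplus edge $N^2$ cannot be involved, since matching $\{\ell_0,r_0\}$ would need $C(s)=0$; hence $i=aN+s$, $j=aN+s'$ with $s\neq s'$ and $C(s)=C(s')$, a collision of $C$. This is exactly the paper's ``inefficient encoding'' phenomenon: the instance is $K_{N,N}$ with one extra edge, and the extra edge is absorbed precisely by the unavoidable non-injectivity of $C$.

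\textbf{Items 2--4 (the join gadget).}
For item 2, a \textsc{Bad $k$-Coloring} instance $(E,c)$ already has the node and edge counts of a $k$-\textsc{Tur\'an} instance, so I would pass $E$ through unchanged: an error pulls back to an error, and any returned $(k+1)$-clique has, by pigeonhole over the $k$ colors of $c$, two equal-colored vertices whose (present) edge is a bad edge; in particular \textsc{Bad 2-Coloring} $\le$ \textsc{Mantel}. For item 3, take the \textsc{Bad $k$-Coloring} instance on $kN$ nodes, add $N$ fresh nodes colored $k+1$, and join them completely to the $kN$ old nodes; this adds $kN^2$ edges, giving $\binom{k}{2}N^2+1+kN^2=\binom{k+1}{2}N^2+1$ edges on $(k+1)N$ nodes, a legitimate \textsc{Bad $(k+1)$-Coloring} instance. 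Every new edge is bichromatic and distinct from all other edges, so any returned error or bad edge lies among the old edges and pulls back directly. Item 4 is identical except the fresh $N$-node part is left edgeless (an independent set); the edge and node counts again match $(k+1)$-\textsc{Tur\'an}, and a returned $(k+2)$-clique meets the independent part in at most one vertex, hence has at least $k+1$ old vertices that already form a $(k+1)$-clique of the input graph, while errors again pull back immediately.

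\textbf{Main obstacle.}
The only delicate step is item 1: one must choose the $N^2+1$ edges so that the proper coloring and self-loop-freeness eliminate the easy certificates, while every surviving certificate (a repeated edge pair) is forced to expose a collision of $C_0$. Concentrating all surplus into a single ``column'' of the $K_{N,N}$-like graph, indexed through $C$, achieves both; items 2--4 then only require the bookkeeping that counts match and that the gadget edges are inert.
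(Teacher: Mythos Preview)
Your proposal is correct and matches the paper's approach in every part: items 2--4 are exactly the same join gadgets and pigeonhole/independent-set arguments, and item~1 is a minor variant of the paper's construction (you route only the second edge-coordinate through the circuit after preprocessing it to avoid~$0$, whereas the paper routes \emph{both} coordinates through the circuit and handles the zero case directly via the extra edge $(0,2^n)$). Both versions of item~1 work for the same reason, and your preprocessing step is in fact unnecessary---without it the surplus edge would simply detect a zero of $C_0$---but harmless.
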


\begin{proof}
1. We reduce from the {\sc Collision} problem. Given a circuit $D: [0,2^n-1] \mapsto [0,2^n-1]$  for the  {\sc Collision} problem, we construct an instance $(E,C)$ of the {\sc Bad $2$-Coloring} problem
on $2^{n+1}$ nodes $V=[0,2^{n+1}-1]$. The coloring $C$ maps nodes $[0,2^n -1]$ to color 0 and nodes $[2^n, 2^{n+1}-1]$ to color 1. The edge function $E: [0,2^{2n}] \mapsto V^2$ is defined as follows. For an index $i < 2^{2n}$, let $a_i = \lfloor \frac{i}{2^n} \rfloor $ and $b_i = i \mod 2^n$; we set $E(i) = (D(a_i), D(b_i) + 2^n)$.  For index $i=2^{2n}$ we set $E(2^{2n})= (0, 2^n)$.

We claim that a certificate for {\sc Bad $2$-Coloring} readily yields a certificate for the {\sc Collision} instance. Note first that every edge $E(i)$ consists of a node in $[0,2^n -1]$ and a node in $[2^n,2^{n+1}-1]$, and these nodes have different colors. Therefore, the only possible certificate for the {\sc Bad $2$-Coloring} instance is two distinct indices $i, j$ such that $E(i)=E(j)$.
If both $i,j < 2^{2n}$, then $(D(a_i), D(b_i) + 2^n)= (D(a_j), D(b_j) + 2^n)$, hence $D(a_i)=D(a_j)$ and $D(b_i)=D(b_j)$. Since $i \neq j$, either $a_i \neq a_j$ or $b_i \neq b_j$ (or both), thus we get a solution to our original {\sc Collision} instance. If one of $i,j$ is $2^{2n}$, say $i=2^{2n}$ and $j< 2^{2n}$, then $D(a_j)=0$ and $D(b_j)=0$, thus we get an element that is mapped to 0 by $D$.

\medskip
\noindent
2. Given an instance $(E,C)$ of {\sc Bad $k$-Coloring} which defines a graph $G$ and a $k$-coloring $C$ of its nodes, consider the instance of {\sc $k$-Tur\'an} specified by the same edge circuit $E$.
A certificate for this {\sc $k$-Tur\'an} instance either gives an error in the function $E$, which is also a certificate  for the {\sc Bad $k$-Coloring} instance, or consists of the indices of the edges of a clique on $k+1$ nodes in $G$. At least two of these $k+1$ nodes are given the same color by $C$, thus the edge connecting them is a certificate for the {\sc Bad $k$-Coloring} instance.

\medskip
\noindent
3. Let $(E,C)$ be an instance of the {\sc Bad $k$-Coloring} problem specifying a graph $G$ on $k 2^n$ nodes $V=[0, k 2^n -1]$. The function $E: [0, \binom{k}{2} (2^n)^2] \mapsto V^2$ specifies the edges of the graph and the function $C: V \mapsto [k]$ specifies a coloring of the nodes with $k$ colors.
We construct an instance $(E',C')$ of {\sc Bad $(k+1)$-Coloring} that specifies a graph $G'$ on $(k+1) 2^n$ nodes $V' = V \cup W$ where $W=[k 2^n, (k+1)2^n -1]$. The coloring function $C'$ maps every node $u \in V$ to its original color $C(u) \in [k]$ and maps every node $u \in W$ to color $k+1$.
The edge set of $G'$ consists of all the edges of $G$ and all possible edges between $V$ and $W$.
Note that $\binom{k}{2} (2^n)^2 +1 + k2^n \cdot 2^n = \binom{k+1}{2} (2^n)^2 +1$.
We define the function $E'$ so that it maps the first $\binom{k}{2} (2^n)^2 +1$ indices to the edges of $G$ (i.e. set $E'(i) = E(i)$ for all $i \in [0, \binom{k}{2} (2^n)^2]$), and maps the remaining indices to distinct pairs $(v,w), v \in V, w \in W$.

Consider a certificate for the {\sc Bad $(k+1)$-Coloring} instance $(E',C')$. All new edges in $V \times W$ are distinct valid edges that are legally colored with different colors. Therefore, the certificate must consist of one or two  original edges of the given graph $G$, and thus it is alao a certificate for the given {\sc Bad $k$-Coloring} instance.

\medskip
\noindent
4. The reduction is the same as in part 3. A $(k+2)$-clique in $G'$ is either entirely contained in $G$ or it consists of a node of $W$ and a $(k+1)$-clique in $G$.
\end{proof}

Thus, we have a hierarchy of problems on top of PPP.  The Bad Coloring problems can be viewed as instances of the pigeonhole problem 
(there is no iteration here), but the mapping is given indirectly and cannot be easily constructed:
We can view the indices of the edges as the pigeons and the potential legal edges, i.e. all the pairs of differently colored nodes, as the holes. There are more pigeons that holes, so either two pigeons are mapped to the same hole
($E(i), E(j)$ are the same edge for some pair of indices $i,j$)), or some pigeon is not mapped to a hole (for some $i$, $E(i)=(a,a)$ or $E(i)=(a,b)$ with $C(a)=C(b)$; this corresponds to the special 0 value in the PPP problem). The difference with PPP, is that the set of holes (the range of the mapping) is not given a priori explicitly 
as a set of bit-strings (or integers) as in PPP,
but rather it is implied indirectly by the coloring $C$.
As a consequence, even for $k=2$, we cannot compute easily in polynomial time for example the number $h$ of available holes ($h$ is the product of the sizes of the two color classes), and we cannot compute efficiently an index function mapping each legal pair of nodes (pair $(a,b)$ with $C(a) \neq C(b)$) to an index in $[h]$. 
In the Tur\'an problems, there is in addition the complication of optimizing over all partitions (colorings) and of seeking a clique rather than a single edge.

Another example along the same lines is the following  {\sc bad $k$-set coloring} problem; Given (poly($n$)-size circuits specifying) a $k$-coloring $C$ of a set $V$ of $k 2^n$ nodes and a family $F$ of $2^{kn}+1$ $k$-sets over $V$, find a $k$-set in $F$ that is not panchromatic, i.e. two of its elements have the same color, or find two equal sets in $F$. The case $k$=1 is equivalent to the {\sc Collision} problem (which defines PPP). For higher values of $k$, the problems form a hierarchy, where again existence of a certificate is guaranteed by (1) the answer to an optimization problem (what is the maximum number of panchromatic $k$-sets over all $k$-colorings), and (2) the pigeonhole principle, where however the mapping is not given explicitly, but is defined indirectly in an inefficient manner. See the appendix for detailed proofs of these properties of {\sc bad $k$-set coloring}.

\section{Discussion and Future Work}

The generalizations of PPP which we explore in this paper seem to give rise to a remarkably rich set of tantalizing open questions.  Some examples:

\begin{enumerate}
  \item Prove black box separations between the problems and classes studied in this paper. For example, prove a separation between PPP and PLC; between PEPP and PSC; between PPP and the hierarchy  of Tur\'an and Bad coloring problems.
    \item What other natural problems belong to PLC or are PLC-complete? One problem in TFNP that has long evaded classification is {\sc Bertrand-Chebyshev}: given a number $n$, find a prime number between $n$ and $2n$. 
    \item What is the complexity of finding monochromatic cliques in smaller graphs, whose existence is guaranteed by a century of fascinating improvements of Ramsey's theorem?
    \item What natural problems belong to the class PSC (but not to PEPP)? 
    \item How does PLC relate to problems in cryptography, and specifically lattices? 
    \item The Bad Coloring hierarchy suggests a novel source of computational hardness: inefficient encoding of objects. What other interesting natural problems share this type of hardness? 
    \item More generally, what other problems from extremal combinatorics give rise to search problems in TFNP, and how are these problems classified in TFNP subclasses? 
\end{enumerate}

\newpage

\bibliography{references} 

\newpage

\appendix

\section{Missing material from Section 2 (Long Choice)}

We prove that in the definition of Long Choice, it is unimportant what the initial element $a_0$ actually is, and whether it is specified or not. Consider a variant of the problem where a specific initial element is required. 

\begin{defi}[Problem: {\sc Constrained Long Choice}]
	Consider a set $U$ of $2^n$ objects, each represented by a unique binary $n$-bit string. We are given a sequence of $n-1$ predicate functions, $P_0, \dots, P_{n-2}$ represented by poly($n$)-size circuits. Predicate function $P_i$ has arity $i + 2$: 
	$$ P_i : U^{i+2} \mapsto \{0,1\}$$
	We are also given an initial element, $a_0$. 
	The problem is to find a sequence of $n+1$ distinct objects $a_0, \dots, a_{n}$ in $U$, with the following property: for all $i$ in $[0, \dots, n-2]$, for all $j > i$, $P_i(a_0, \dots, a_{i},a_{j})$ is the same.

\end{defi}

By the proof of Theorem 1, {\sc Constrained Long Choice} is also a total search problem.

\begin{proposition}
	{\sc Long Choice} with no initial element is equivalent to {\sc Constrained Long Choice}.
\end{proposition}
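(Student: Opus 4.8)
The plan is to prove the equivalence by giving a polynomial‑time reduction in each direction; one direction is immediate, and the other carries all of the content.

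\textbf{Unconstrained to Constrained (easy).} Given an instance of {\sc Long Choice} with predicates $P_0,\dots,P_{n-2}$ over $U=\{0,1\}^n$, I would forward the same predicates to {\sc Constrained Long Choice} together with the fixed starting object $a_0:=0^n$. Any certificate $0^n,a_1,\dots,a_n$ of the constrained instance is, by definition, already a certificate of the unconstrained instance, so nothing further is needed here.

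\textbf{Constrained to Unconstrained (the substance).} Given $(P_0,\dots,P_{n-2})$ over $U$ and a fixed object $a_0\in U$, the idea is to ``hard‑wire'' $a_0$ into the predicates so that any unconstrained certificate, after discarding an irrelevant first element, spells out a constrained certificate rooted at $a_0$. Concretely I would set
\[
  Q_i(b_0,b_1,\dots,b_i,x) \;:=\; P_i(a_0,b_1,\dots,b_i,x),\qquad i=0,\dots,n-2,
\]
so that $Q_i$ simply ignores its first argument $b_0$ and uses $b_1,\dots,b_i$ as the arguments $a_1,\dots,a_i$ of $P_i$; the arities match. Then I would verify the bookkeeping claim: if $b_0,b_1,\dots,b_n$ is a certificate of the $Q$‑instance, then for every $i$ the value $P_i(a_0,b_1,\dots,b_i,b_j)$ is constant over $j\in\{i+1,\dots,n\}$, and hence $(a_0,b_1,\dots,b_n)$ satisfies exactly the {\sc Long Choice} conditions of the original instance with the prescribed root $a_0$. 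The reason this index shift introduces \emph{no} off‑by‑one error is precisely that $b_0$ is never fed into any $Q_i$: the $n+1$ certificate elements are a pure dummy $b_0$ together with $b_1,\dots,b_n$ playing the parts of $a_1,\dots,a_n$.

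The step I expect to be the main obstacle is ensuring \emph{distinctness} of $(a_0,b_1,\dots,b_n)$, i.e.\ that $a_0\notin\{b_1,\dots,b_n\}$: the construction above does not preclude the unconstrained oracle from returning a certificate that reuses $a_0$ at a later position (when, for example, all $P_i$ are constant, \emph{every} sequence of distinct elements is a certificate). My plan to fix this is to devote the first predicate to a ``guard'' of the form $Q_0(b_0,x):=[x=a_0]$: since a {\sc Long Choice} predicate's value must be constant along the tail of any certificate, and ``$x=a_0$'' can hold for at most one tail element, the guard forces that common value to be ``false'', which evicts $a_0$ from $b_1,\dots,b_n$. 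Spending a predicate on the guard is compensated by padding the instance with one extra level, so that the predicate count again matches $P_0,\dots,P_{n-2}$; the delicate part — and where I would expect the bulk of the effort to go — is choosing the padding and the embedding of $U$ into the enlarged universe so that projecting a certificate back to $U$ produces no spurious collisions among the $b_j$'s while keeping the index shift intact. Once this is in place, both reductions are plainly polynomial‑time and solution‑preserving, and since {\sc Constrained Long Choice} is already total (by the proof of Theorem~1), the two problems are polynomially equivalent.
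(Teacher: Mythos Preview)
Your easy direction matches the paper's. Your hard direction diverges at exactly the point you flag as ``the main obstacle'': you hard-wire $a_0$ into the predicates, correctly observe that $a_0$ may reappear among $b_1,\dots,b_n$, and then propose to repair this with a guard predicate $Q_0(b_0,x)=[x=a_0]$ plus one level of padding. That is where the gap lies, and it is not merely bookkeeping. Once you enlarge the universe to $\{0,1\}^{n+1}$ you must project the tail back to $\{0,1\}^n$, and a \emph{single} extra predicate cannot simultaneously make that projection injective on the tail and keep $a_0$ out of the projected tail. If you spend $Q_0$ on $[x=a'_0]$ you exclude only one of the two preimages of $a_0$, and two distinct $b_j$'s may still project to the same element; if you instead spend $Q_0$ on fixing the extra bit you recover injectivity but lose the $a_0$-exclusion. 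Whatever natural choice you make, one side leaks. You describe this as ``the delicate part'', but it is in fact the entire content of the reduction, and the proposal leaves it unresolved.

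The paper sidesteps all of this with a one-line relabeling that stays in the original universe and needs no guard or padding: given the target $a_0$, set
\[
  T_k(b_0,b_1,\dots,b_{k+1})\;:=\;P_k(a_0,c_1,\dots,c_{k+1}),
\]
where $(a_0,c_1,\dots,c_{k+1})$ is obtained from $(b_0,b_1,\dots,b_{k+1})$ by swapping every occurrence of $a_0$ with $b_0$. The swap is a bijection of $U$, so applying it to any valid certificate $d_0,\dots,d_n$ for the $T$-instance produces a sequence $a_0,e_1,\dots,e_n$ of \emph{distinct} elements that, by construction, satisfies all the $P$-conditions with the prescribed root $a_0$. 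Distinctness comes for free from bijectivity; there is nothing to pad and nothing to project. If you graft this swap onto your construction you will find that the guard and the padding become superfluous and you recover exactly the paper's argument.
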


\begin{proof}
	It is clear that {\sc Long Choice} with no initial element reduces to {\sc Constrained Long Choice}. For any instance of the former problem, we can arbitrarily specify an initial element, turning the problem into an instance of the latter problem. 
	
	In the other direction, suppose we are given a {\sc Constrained Long Choice} instance which specifies an initial element, $a_0$ and a sequence of predicate functions $P_0, P_1, \dots, P_{n-1}$. 
	
	Accordingly, we define a {\sc Long Choice} instance with no initial element. To do so, we define a new sequence of predicate functions $T_0, T_1, \dots, T_{n-2}$. $T_k$ has the same arity as $P_k$, and it is defined in terms of $P_k$. For distinct $b_0, b_1, \dots, b_{k+1}$, $T_k(b_0, b_1, \dots, b_{k+1})$ is defined using the following sequence of operations. 
	\begin{enumerate}
		\item If $b_0 = a_0$, then $T_k(b_0, b_1, \dots, b_{k+1}) = P_k(b_0, b_1, \dots, b_{k+1})$. In this case, $T_k$ and $P_k$ are identical. 
		\item Otherwise, $b_0 \neq{a_0}$. Consider the input to $T_k$, a sequence $b_0, b_1, \dots, b_{k+1}$. We perform the following operation: we first replace $b_0$ with $a_0$. Then, we replace any instances of $a_0$ among $b_1, \dots, b_{k+1}$ with $b_0$. In some sense, we have 'swapped' $a_0$ and $b_0$. This yields a modified sequence $a_0, c_1, \dots, c_{k+1}$. We define $T_k(b_0, b_1, \dots, b_{k+1}) = P_k(a_0, c_1, \dots, c_{k+1})$. 
	\end{enumerate}
	
	Now, consider any certificate for this problem, given by a sequence of distinct elements $d_0, d_1, \dots, d_{n}$. We perform the same 'swap' operation from step $2$. That is, we replace $d_0$ with $a_0$ and we replace any instance of $a_0$ among $d_1, \dots, d_n$ with $d_0$. This yields a modified sequence $a_0, e_1, \dots, e_n$, which serves as a valid certificate to the original Constrained Long Choice problem. 
	
	To see why this is true, we can consider two cases (as above). First, if $d_0 = a_0$, then, we are done, since $T_k$ and $P_k$ are identical in this case, so our certificate automatically serves as a certificate for the original problem. Otherwise, if $d_0\neq{a_0}$, all functions $T_k$ consider the modified sequence $a_0, e_1, \dots, e_{k+1}$. In this scenario, for any $k\in{\{0,1, \dots, n-1\}}$, $T_k(d_0, d_1, \dots, d_{k+1}) = P_k(a_0, e_1, \dots, e_{k+1})$. If $d_0, d_1, \dots, d_{n}$ is indeed a valid certificate, we know that for any $k\in{\{0, 1, \dots, n-2\}}$, $T_k(d_0, d_1, \dots, d_{k}, d_{k+1}) = T_k(d_0, d_1, \dots, d_k, d_{j})$ for all $j > k$. This, in turn, implies the same fact for $P_k$. Namely, it implies that for any $k\in{\{0, 1, \dots, n-2\}}$, $P_k(a_0, e_1, \dots, e_{k}, e_{k+1}) = T_k(a_0, e_1, \dots, e_k, e_{j})$ for all $j > k$. Thus, $a_0, e_1, \dots, e_n$ is a valid certificate to the original constrained Long Choice problem, and we are done. 
\end{proof}

Finally, we define Long Choice with Order formally:

\begin{defi}[Long Choice with Order]
	Consider a set $U$ of $2^n$ objects, each represented by a unique $n$-bit string. We are given a sequence of $n-1$ predicate functions, $P_0, \dots, P_{n-2}$ represented by poly($n$)-size circuits. Predicate function $P_i$ has arity $i + 2$: 
	$$ P_i : U^{i+2} \mapsto \{0,1\}$$
We are also given a function $F$, which purportedly defines a strict total order over the above set $U$. The function $F$ is also  represented by a poly($n$)-size circuit: 
	
	$$F: U^2 \mapsto \{0,1\} $$
	Given two distinct inputs $a_x, a_y$, $F(a_x, a_y) = 0$ indicates that $a_x < a_y$ in this total ordering. $F(a_x, a_y) = 1$ indicates that $a_x > a_y$. The problem is to find any of the following:
	\begin{enumerate}
		\item \textbf{Monotone certificate: }A monotone increasing sequence of $n+1$ distinct objects $a_0, \dots, a_{n}$ in $U$,  with the following property: for each $i$ in $[0, \dots, n-2]$, $P_i(a_0, \dots, a_{i},a_{j})$ is the same for all $j > i$.
		\item \textbf{Order Violation: } A set of $3$ distinct objects $a_x, a_y, a_z$ which violate the transitivity property of total orders.
	\end{enumerate}
\end{defi}

\section{Missing material from Section 3 (Ramsey)}

As mentioned earlier,  it was shown previously in \citep{Komargodski2019} that there exists a randomized reduction from PWPP to {\sc Ramsey}  as well as a deterministic reduction from PWPP to multi-color Ramsey. We provide here an alternative deterministic reduction from PWPP
to a multi-color Ramsey problem using properties of metric spaces. 

First, recall the well-known fact that the maximum number of pairwise equidistant points in $n$-dimensional Euclidean space $\mathbb{R}^n$ is $n+1$.
Consider the set $U = \{0,1\}^m$. Note that the Hamming and Euclidean distance metrics are in direct correspondence over $U$. In particular, the Hamming distance between two elements $i, j\in {U}$ is the \textit{square} of the Euclidean distance between those two elements. 
Therefore, there cannot be more than $m+1$ pairwise equidistant points in $U = \{0,1\}^m$, where distance is measured using the Hamming distance metric.
We can use this fact to reduce PWPP to multi-color Ramsey.

We will use below the shorthand term {\sc $k$-color standard Ramsey} to refer to the
$(k, \frac{n}{k \log k})$-color Ramsey problem. (We omit floor and ceiling functions to make things more readable.). Note that this problem is defined on a complete graph of size $2^n$; this is why we call it a standard problem. The edges are colored with $k$ colors and we seek a monochromatic clique of size $\frac{n}{k \log k}$.

\begin{theorem}
	{\sc $(n^{\delta})$-color standard Ramsey} is PWPP-hard for all $\delta < 1/2$
\end{theorem}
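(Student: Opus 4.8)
The plan is to reduce from {\sc Collision} with short output, i.e.\ circuits $C:\{0,1\}^n\to\{0,1\}^t$ with $t\approx n^\delta$, which is PWPP-complete for every fixed exponent $\delta>0$ by the robustness result of \citep{Komargodski2019}. Given such a $C$, I would build a {\sc $(n^\delta)$-color standard Ramsey} instance on the complete graph with vertex set $\{0,1\}^n$ (the domain of $C$), using $k:=t+1$ colors with $t$ chosen so that $k=\lceil n^\delta\rceil$, and coloring each edge $\{i,j\}$ by the Hamming distance $d_H(C(i),C(j))\in\{0,1,\dots,t\}$. This edge-coloring is computed by a poly($n$)-size circuit (evaluate $C$ twice, XOR the outputs, count the ones), and the clique size requested is the standard value $s=\frac{n}{k\log k}$; since $R(k,s)\le k^{ks}=k^{n/\log k}=2^n$, a monochromatic $s$-clique is guaranteed, so this is a legitimate instance of the standard problem.

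The second step extracts a collision from a returned monochromatic clique $Q$ of size $s$ with common edge-color $c$. If $c=0$, then $C(i)=C(j)$ for any two vertices $i\ne j$ of $Q$, which is already a collision. If $c\ge 1$, then the images $\{C(i):i\in Q\}$ are $s$ \emph{distinct} points of $\{0,1\}^t$ lying pairwise at Hamming distance exactly $c$; since Hamming distance on $\{0,1\}^t$ is the squared Euclidean distance, these are $s$ pairwise equidistant points of $\mathbb{R}^t$, and there can be at most $t+1$ such points. Hence the case $c\ge 1$ cannot occur as long as $s>t+1$, forcing $c=0$ and yielding a collision in $C$, from which a collision for the original PWPP instance is recovered through the composed reduction.

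It remains to verify $s>t+1$, which is where the hypothesis $\delta<1/2$ enters and is really the only substantive point. With $k=\lceil n^\delta\rceil$ we have $s=\frac{n}{k\log k}=\Theta\!\left(\frac{n^{1-\delta}}{\delta\log n}\right)$ while $t+1=k=\Theta(n^\delta)$, so $s>t+1$ amounts to $n^{1-2\delta}$ dominating $\Theta(\log n)$, which holds for all sufficiently large $n$ exactly because $1-2\delta>0$. For the finitely many small $n$ where the inequality fails, the {\sc Collision} instance has constant size and a collision is found by exhaustive search, so the reduction is polynomial-time overall.

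I expect the main obstacle to be not conceptual depth but parameter bookkeeping: one must realize that the reduction should start from the short-output form of PWPP, so that the Hamming distances among images take only $t+1\approx n^\delta$ values, which is what keeps the number of colors down to $n^\delta$; then check that the bound $R(k,s)\le k^{ks}$ makes $s=\frac{n}{k\log k}$ on $2^n$ vertices precisely the ``standard'' parametrization, while simultaneously $s$ stays above the threshold $t+1$ for pairwise equidistant points in $\{0,1\}^t$. These three constraints are mutually compatible exactly in the regime $\delta<1/2$, which is why that is the range claimed.
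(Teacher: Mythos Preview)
Your proposal is correct and follows essentially the same approach as the paper: color edges by the Hamming distance between the images under a short-output PWPP circuit, then invoke the bound of $t+1$ on pairwise equidistant points in $\{0,1\}^t$ to rule out every nonzero color once the monochromatic clique exceeds that size.

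The one cosmetic difference is how collisions are handled. You allocate an explicit color $0$ for pairs with $C(i)=C(j)$, so the extraction is immediate: the equidistant-point bound kills every color $c\ge 1$, forcing $c=0$, and any two clique vertices give a collision. The paper instead keeps the color range in $[n^\delta]$ by folding collisions into the existing color $\lceil n^\delta/2\rceil$; this means that once the clique is shown to have that color, one must still argue that the \emph{distinct} images among the clique form an equidistant set of size at most $n^\delta+1$, hence a repeat occurs among the $>n^\delta+1$ vertices. Your variant trades one bit of output length for a cleaner extraction, which is a mild simplification but not a different route.
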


\begin{proof}
	Fix any positive $\delta < 1/2$. Recall that PWPP($n, n-1$) reduces to PWPP($n, n^{\delta}$).
	Consider an instance of the latter problem, given by a circuit, $C$: 
	
	$$C: \{0,1\}^{n} \mapsto \{0,1\}^{n^{\delta}} $$.
	
	We reduce this problem to a $n^{\delta}$-color standard Ramsey problem, which is defined by a circuit: 
	
	$$C_r: (\{0, 1\}^n)^2 \mapsto [n^{\delta}]$$
	
	$C_r$ is defined as follows: For distinct inputs, $i, j$, the circuit $C_r$: 
	\begin{enumerate}
		\item Returns the hamming distance between $C(i)$ and $C(j)$ if $C(i) \neq{C_j}$
		\item Returns $\lceil n^{\delta} / 2 \rceil$ if $C(i) = C(j)$. 
	\end{enumerate}

	As mentioned above, this problem is total and we are guaranteed to find a clique of size $\frac{n}{\delta n^{\delta \log n  }} = \frac{n^{1-\delta}}{\delta  \log n} \geq{\frac{n^{1 - \delta }}{\delta  n^{\epsilon}}} > 2 * n^{\delta}$, where $0<\epsilon<1/2 - \delta$ and all inequalities hold asymptotically. Thus, the guaranteed clique size is greater than $n^{\delta} + 1$. 
	
	We know that there are at most $n^{\delta} + 1$ distinct points in $\{0,1\}^{n^{\delta}}$ which are pairwise equidistant (under Hamming distance). 
	Consider a desired certificate (a clique) consisting of vertices $a_1, a_2, \dots, a_i$. What is the color of the edges of this clique? We claim that it must be the color $\lceil n^{\delta} / 2 \rceil$. To see why, suppose that it is any other color. Then, consider the $n^{\delta}$-bit strings $C(a_1), C(a_2), \dots, C(a_i)$. By the definition of $C_r$, this collection of values must be pairwise equidistant, which is impossible since $i > n^{\delta} + 1$.
	
	Thus, $C_r(a_x, a_y) = \lceil n^{\delta} / 2 \rceil$ for all $a_x, a_y$. Furthermore, there can be at most $n^{\delta} + 1$ distinct elements in the set $C(a_1), C(a_2), C(a_3), \dots, C(a_i)$. Thus, at least two elements of our clique collide under $C$, and we are done. 
\end{proof}

\section{Missing material from Section 7 (Mantel, Tur\'an, Bad Coloring) }


The {\sc Bad $k$-set coloring} problems constitute a similar hierarchy with PPP at is base.
We define first formally the problems.

\begin{defi}[{\sc Bad $k$-set Coloring}]
    We have a set $V=[0,k 2^n -1]$ of $k 2^n$ objects (encoded by bitstrings with $n+ \lceil \log k \rceil$ bits) and we are given a poly($n$)-sized circuit $C: V \mapsto [k]$ defining a $k$-coloring of the objects, and a poly($n$)-size circuit $F: [0,2^{kn}] \mapsto V^k$, which is supposed to represent a $k$-set system with $2^{kn} +1$ sets. The problem is to find one of the following certificates:
   	\begin{enumerate}
 		\item (Repeated set): Two distinct indices $i,j$ such that $F(i), F(j)$ contain the same elements
 		(not necessarily in the same order), or
 		\item (Bad set): An index $i$ such that $F(i)$ contains two elements $a ,b$ with  $C(a)=C(b)$; the two elements $a, b$ could be identical.
	\end{enumerate}
\end{defi}

\begin{theorem}
1.  PPP is equivalent to  {\sc Bad $1$-set Coloring}.\\
2. For all $k \geq 1$, {\sc Bad $k$-set Coloring} reduces to {\sc Bad $(k+1)$-set Coloring}
\end{theorem}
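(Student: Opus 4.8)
The plan is to handle the two parts separately, in both cases mimicking the reductions already used for the graph {\sc Bad $k$-Coloring} hierarchy. For part~1, I would first observe that when $k=1$ the colouring circuit $C$ is forced to be the constant map onto the single colour, so no ``bad set'' certificate can ever arise and the problem collapses to: given a circuit $F : [0,2^n]\to[0,2^n-1]$, find $i\neq j$ with $F(i)=F(j)$ — i.e.\ the bare pigeonhole problem with $2^n+1$ pigeons and $2^n$ holes. Equivalence with PPP then amounts to the two standard translations between this problem and {\sc Collision}: to place {\sc Bad $1$-set Coloring} in PPP, set $v^{\ast}=F(2^n)$, let $\sigma$ be the transposition of $[0,2^n-1]$ swapping $0$ with $v^{\ast}$ (identity if $v^{\ast}=0$), and take $D(i)=\sigma(F(i))$, so that a zero of $D$ is a preimage of $v^{\ast}=F(2^n)$ (a collision of $F$) and a collision of $D$ is a collision of $F$; for the converse, extend a {\sc Collision} circuit $D$ to $F$ by $F(i)=D(i)$ for $i<2^n$ and $F(2^n)=0$, so a collision of $F$ is either a collision of $D$ or, if index $2^n$ is involved, a zero of $D$. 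Both maps are immediate and I do not expect any difficulty here.

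For part~2, the plan is: given an instance $(C,F)$ of {\sc Bad $k$-set Coloring} on $V=[0,k2^n-1]$, adjoin $2^n$ fresh nodes $W=[k2^n,(k+1)2^n-1]$, set $V'=V\cup W$, let $C'$ extend $C$ and paint all of $W$ with the new colour $k+1$, and build $F'$ so that each of the $2^{(k+1)n}+1$ required $(k+1)$-sets is some $F(i)$ with one element of $W$ appended. The key design choice — and the step I expect to be the only real obstacle — is how to distribute the indices: naively appending a single fixed $w_0\in W$ to ``reused'' sets would create spurious repeated-set solutions among the extra indices that fail to map back. Instead I would use the division-with-remainder injection $\psi(m)=(g(m),h(m))$ with $g(m)=\lfloor m/2^n\rfloor\in[0,2^{kn}]$ and $h(m)=k2^n+(m\bmod 2^n)\in W$ — this is injective on $[0,2^{(k+1)n}]$ because $(2^{kn}+1)2^n=2^{(k+1)n}+2^n\ge 2^{(k+1)n}+1$ — and set $F'(m)$ to be $F(g(m))$ with $h(m)$ appended.

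It then remains to check that any certificate for $(C',F')$ maps back to one for $(C,F)$. For this I would use that $W$ is disjoint from $V$ and $h(m)$ occurs exactly once in $F'(m)$, so $h(m)$ is the unique colour-$(k+1)$ element of $F'(m)$ and all other elements are exactly those of $F(g(m))$. Hence a ``bad set'' $F'(m)$ — two elements of equal colour, possibly identical — must have its offending pair inside $F(g(m))$ (a colour-$(k+1)$ element cannot be paired monochromatically with a colour-$[k]$ element, and $h(m)$ cannot be repeated), so $g(m)$ is a bad-set certificate for $(C,F)$; and a ``repeated set'' $F'(m)=F'(m')$ with $m\neq m'$ forces $h(m)=h(m')$ (compare the colour-$(k+1)$ elements) and $F(g(m))=F(g(m'))$ as sets (compare the rest), whence $g(m)\neq g(m')$ by injectivity of $\psi$, giving a repeated-set certificate for $(C,F)$. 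The remaining bookkeeping — re-encoding nodes and indices into bit-strings of the lengths prescribed by the definitions, bounding circuit sizes, and noting that totality of the constructed instances is inherited from the general argument (the bound $\prod_c|C^{-1}(c)|\le 2^{kn}$ on the number of panchromatic $k$-sets together with the pigeonhole principle) — is routine.
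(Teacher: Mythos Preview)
Your proposal is correct and follows essentially the same route as the paper: for part~1 the paper likewise observes that a $1$-set cannot be bad so the problem collapses to finding a collision for $F:[0,2^n]\to[0,2^n-1]$ (the paper simply cites this as a PPP-complete problem rather than spelling out the two translations you give), and for part~2 your construction $F'(m)=(F(g(m)),h(m))$ with $g(m)=\lfloor m/2^n\rfloor$, $h(m)=k2^n+(m\bmod 2^n)$ is exactly the paper's $F'(i)=(F(a_i),b_i+k2^n)$ with $a_i=\lfloor i/2^n\rfloor$, $b_i=i\bmod 2^n$, and your certificate-mapping argument matches the paper's.
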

\begin{proof}
1.  This follows immediately from the definitions. Note that for $k=1$, all objects have the same color. Also an $1-$set is just a singleton, so it is not bad. Thus, {\sc Bad $1$-set Coloring} is simply the problem of finding a collision for a given mapping $F: [0,2^{n}] \mapsto [0,2^n-1]$, which is a PPP-complete problem.

\medskip
\noindent
2. Given an instance $(F, C)$ of {\sc Bad $k$-set Coloring} on a set $V=[0,k 2^n -1]$ of $k 2^n$ objects, we construct an instance $(F',C')$ of {\sc Bad $(k+1)$-set Coloring} on a set $V'=[0,(k+1) 2^n -1]$ of $(k+1) 2^n$ objects. For the coloring function $C'$, we let $C'(i)=C(i) \in [k]$ for $i \in V$, and $C'(i)=k+1$ for
$i \in W =[k 2^n, (k+1) 2^n -1]$. The function $F'$ that defines the $(k+1)$-set system combines every $k$-set in $F$ with every (new) object in $W$, except for the last $k$-set that is only combined with only one new object. Note that $2^{(k+1)n} +1 = 2^{kn} \cdot 2^n +1$.
Formally, for each index $i \in [0,2^{(k+1)n}]$, let $a_i = \lfloor \frac{i}{2^n} \rfloor$, and $b_i = i \mod 2^n$;
set $F'(i) = ( F(a_i), b_i + k2^{n})$. Note that $F'(2^{(k+1)n}) = ( F(2^{kn}), k 2^n)$ (this is the only set that combines $F(2^{kn})$ with an element of $W$).

Consider a certificate for the instance $(F',C')$ of {\sc Bad $(k+1)$-set Coloring}. If it is a bad set, i.e. an index $i$ such that $F'(i)$ contains two objects with the same color, then $F(a_i)$ must contain two elements with the same color, since all objects in $W$ have color $k+1$; thus, $a_i$ is a certificate for the original 
{\sc Bad $k$-set Coloring} instance. 

If the certificate for the instance $(F',C')$ of {\sc Bad $(k+1)$-set Coloring} is a repeated set, i.e. two distinct indices $i, j$ such $F'(i)$ and $F'(j)$ represent the same set,
then we must have that $F(a_i), F(a_j)$ represent the same subset of $V$, and $b_i=b_j$.
Since $i \neq j$ and $b_i = b_j$, we have $a_i \neq a_j$, and the indices $a_i, a_j$ are a certificate for
the original  {\sc Bad $k$-set Coloring} instance. 
\end{proof}

\end{document}